\newcommand{\tr}{\text{tr}}
\newtheorem{definition}{Definition}
\newtheorem{proposition}{Proposition}
\newtheorem{remark}{Remark}
\newtheorem{lemma}{Lemma}
\newtheorem{theorem}{Theorem}
\newtheorem{corollary}{Corollary}
\begin{document}

\begin{frontmatter}
%\runtitle{Insert a suggested running title}  % Running title for regular 
                                              % papers but only if the title  
                                              % is over 5 words. Running title 
                                              % is not shown in output.
{
\title{Performance and Robustness Analysis of Stochastic Jump Linear Systems using Wasserstein metric} % Title, preferably not more 
                                                % than 10 words.
}
\thanks[footnoteinfo]{Corresponding author K.Lee. This research was supported by National Science Foundation award \#1016299, with Dr. Helen Gill as the program manager.}
\author[rvt]{Kooktae Lee}\ead{animodor@tamu.edu},
\author[rvt]{Abhishek Halder}\ead{ahalder@tamu.edu},
\author[rvt]{Raktim Bhattacharya}\ead{raktim@tamu.edu}

\address[rvt]{Department of Aerospace Engineering, Texas A\&M University, College Station, TX 77843-3141
USA.}  % Please supply                                              

\begin{keyword}                           % Five to ten keywords,  
Performance and robustness analysis, stochastic jump linear systems, switched linear systems, Wasserstein distance.
\end{keyword}                             % keyword list or with the 
                                          % help of the Automatica 
                                          % keyword wizard

\begin{abstract}                          % Abstract of not more than 200 words.
{
%This paper focuses on the performance and the robustness analysis of stochastic jump linear systems. We present new results that enable uncertainty quantification for general stochastic jump linear systems, not necessarily for Markovian process. The performance and the robustness of the system is quantified via Wasserstein metric that assesses the distance between probability density functions. Both the transient and steady-state performance of the systems with given initial state uncertainties can be analysed in this framework. In addition, we prove that the convergence of this metric implies the mean square stability. Consequently, this paper provides a unifying framework for the performance and the robustness analysis of general stochastic jump linear systems. The practical usefulness and efficiency of the proposed method are verified through numerical examples.
This paper focuses on the performance and the robustness analysis of stochastic jump linear systems. The state trajectory under stochastic jump process becomes random variables, which brings forth the probability distributions in the system state. Therefore, we need to adopt a proper metric to measure the system performance with respect to stochastic switching. In this perspective, Wasserstein metric that assesses the distance between probability density functions is applied to provide the performance and the robustness analysis. Both the transient and steady-state performance of the systems with given initial state uncertainties can be measured in this framework. Also, we prove that the convergence of this metric implies the mean square stability. Overall, this study provides a unifying framework for the performance and the robustness analysis of general stochastic jump linear systems, but not necessarily Markovian jump process that is commonly used for stochastic switching. The practical usefulness and efficiency of the proposed method are verified through numerical examples.
}
\end{abstract}

\end{frontmatter}

\section{Introduction}
{
A jump linear system is defined as a dynamical system constructed with a family of linear subsystem dynamics and a switching logic that conduct a switching between linear subsystems. Over decades, jump linear systems have attracted a wide range of researches due to its practical implementations. For instance, jump linear systems are used for power systems, manufacturing systems, aerospace systems, networked control systems, etc.
In general, a jump linear system can be divided into two different categories depending on the switching logic. One branch is a deterministic switching where the jump process is deterministically given to the system. The utilization of such deterministic jump linear systems stems from plant stabilization\cite{minto1991new}, adaptive control\cite{narendra1994improving}, system performance\cite{lin2009stability}, and resource-constrained scheduling\cite{boctor1990some}. In most cases, the system stability has been one of the major issues to investigate since even stable subsystems make the system unstable by the switching. Hence, numerous results have been established for the stability analysis and the recent literature regarding the stability of deterministic jump linear systems can be found in \cite{lin2009stability}. In \cite{lin2009stability}, a sufficient condition for the stability of deterministic jump linear systems is guaranteed by solving certain linear matrix inequalities (LMIs). Also, the necessary and the sufficient conditions for the stability are shown via a finite tuple, satisfying a certain condition.

Unlike the deterministic jump linear system, a stochastic jump linear system (SJLS) that is another category of jump linear systems refers to systems with the stochastic switching process.
This type of jump linear systems is commonly used to represent the randomness in the switching such as communication delays or packet losses in the networked control systems\cite{hassibi1999control,xiao2000control}.
In \cite{hassibi1999control}, the networked control system with packet loss was modeled as an asynchronous dynamical system incorporating both discrete and continuous dynamics, and its stability was analysed through Lyapunov techniques. Since then, this problem has been formulated in a more general setting by representing the various aspects of communication uncertainties as Markov chains \cite{you2011minimum,coviello2011stabilization,xiong2006stabilization,xiong2007stabilization,liu2009stabilization}. Stability analysis in the presence of such uncertainty, has been performed in the Markov jump linear systems (MJLSs) framework \cite{xiao2000control, zhang2005new, wu2007design, zhang2009stability, karan2006transition,lee2006uniform}. Further, the stochastic stability for a class of nonlinear stochastic systems with semi-Markovian jump parameters is introduced in \cite{hou2006stochastic,li2013stochastic}. 
%Another example of stochastic jump systems is semi-Markov jump systems and the stochastic stability for these systems are introduced in \cite{hou2006stochastic,li2013stochastic}. 
Most previous literatures, however, have only dealt with steady-state analysis in terms of system stability.

Beyond the current literature, this paper has a key contribution for the analysis of a SJLS as follows.
Based on the theory of optimal transport \cite{villani2008optimal}, we propose new probabilistic tools for analysing the performance and the robustness of SJLSs. Compared to the current literatures that only guarantees asymptotic performance with a deterministic arbitrary initial state condition, our contribution is to develop a unifying framework enabling both transient and asymptotic performance analysis with uncertain initial state conditions. 
The main difficulty dealing with analysis of SJLSs is that the system trajectories differ from every run due to the random switching. Moreover, the system state becomes random variables with a probability density function (PDF) even with deterministic initial state conditions. Consequently, we need to adopt a proper metric to measure the performance and the robustness of SJLSs in the distributional sense. In this paper, the Wasserstein metric that enables quantification of the uncertainty is employed for the performance measure. We prove that the convergence of this metric implies the mean square stability.
To sum up, this paper provides the robustness analysis tools under the stochastic jumps with given initial state uncertainties without assuming any structure (e.g. Markov) on the underlying jump process.

The remainder of this paper is organized as follows. In Section II, we provide a brief review of the preliminaries. Section III deals with the performance and the robustness analysis of stochastic jump systems and develops computationally efficient tools for uncertainty quantification. Numerical examples are provided in Section IV, to illustrate the performance and the robustness analysis results developed in this work. Section V concludes the paper.

\textbf{Notation:} The set of real and natural numbers are denoted by $\mathbb{R}$ and $\mathbb{N}$, respectively. Further, $\mathbb{N}_{0} \triangleq \mathbb{N} \cup \{0\}$. The symbols $\tr\left(\cdot\right)$, $\otimes$, and $vec$ denote the trace of a square matrix, Kronecker product, and vectorization operators, respectively. The abbreviation m.s. stands for the convergence in mean-square sense. The notations $\mathbb{P}(\cdot)$ and $X \sim \rho\left(x\right)$ denote the probability and the random variable $X$ with PDF $\rho\left(x\right)$, respectively. The symbol $\mathcal{N}\left(\mu,\Sigma\right)$ is used to denote the PDF of a Gaussian random vector with mean $\mu$ and covariance $\Sigma$.}

\section{Preliminaries}
{
Consider a discrete-time jump linear system as follows.
\begin{align}
x(k+1) &= A_{\sigma_{k}}x(k), \quad k \in \mathbb{N}_{0} \label{dtMJLS}
\end{align}
\noindent where $x(k)$ is the state vector and $A_{\sigma_k}$ denotes the system matrices. $\sigma_k\in\mathcal{M}\triangleq\{1,2,\hdots,m\}$ stands for the stochastic jump process, governing the switching among $m$ different modes of (\ref{dtMJLS}). 
}

In this paper, we will consider general stochastic jump processes $\sigma_k$, and hence $\sigma_k$ can be any arbitrary random process. Then, the resulting dynamics becomes a SJLS as defined next.

%\subsection{Stochastic Jump Linear Systems (SJLS)}
\begin{definition} (\textbf{Stochastic jump linear system})
Tuples of the form $(\pi(k),A_{\sigma_{k}}(x(k)),\mathcal{M})$ is termed as a SJLS, provided the mode dynamics are given by (\ref{dtMJLS}); $\pi(k)$ denote the time-varying occupation probability vectors for prescribed stochastic processes $\sigma_{k}$.
\label{SJSdefn}
\end{definition}

\begin{remark}
A SJLS, as defined above, is a collection of modal vector fields and a sequence of mode-occupation probability vectors. If the jump processes $\sigma_{k}$ is deterministic, then at each time, $\pi(k)$ will have integral co-ordinates (single 1 and remaining $m-1$ zeroes), resulting in a \emph{deterministic switching sequence}. If, however, $\sigma_{k}$ is stochastic jump processes, then $\pi(k)$ will contain proper fractional co-ordinates, resulting in a \emph{randomized switching sequence} where at each time, exactly one out of $m$ modes will be chosen according to probability $\pi(k)$. Thus, starting from a deterministic initial condition, each execution of the SJLS may result in different switching sequences corresponding to random sample paths of $\sigma_{k}$ over $\mathcal{M}$. Every realization of these random switching sequences results in a trajectory realization on the state space, and hence repeated the SJLS executions, even with a fixed initial condition, yields a spatio-temporal evolution of joint state PDF: $\rho\left(x\left(k\right)\right)$. 
\label{GeneralitySJS}
\end{remark}
{
According to the structure that governs the temporal evolution of $\pi(k)$, some subsets of the stochastic jump processes can be listed as follows.

\begin{enumerate}
\item[1)] i.i.d. jump process:\\
A SJLS switching sequence is called stationary, if the occupation probability vector $\pi\left(k\right)$ remains stationary in time. In particular, a stationary deterministic switching sequence implies execution of a single mode (no switching). A stationary randomized switching sequence implies i.i.d. jump process.\\

\item[2)] Markov jump process:\\
Consider a discrete-time discrete state Markov chain with mode transition probabilities given by
\begin{align}
p_{ij} = \mathbb{P}\left(\sigma_{k+1}=j\mid\sigma_{k}=i\right)
\label{TransitionProb}
\end{align}
where $p_{ij} \geq 0$, $\forall i,j\in\mathcal{M}$. Hence, for $k\geq0$, the probability distribution $\pi\left(k\right) \in \mathbb{R}^{m}$ of the modes of (\ref{dtMJLS}), is governed by
\begin{eqnarray}
\pi(k+1) = \pi(k)P, \quad \pi(0) = [\pi_{1}(0)\ \cdots\ \pi_{m}(0)]
\label{piEqndtMJLS}
\end{eqnarray}
where the \emph{transition probability matrix} $P \in \mathbb{R}^{m\times m}$ is a right stochastic matrix with row sum $\sum_{j=1}^{m} p_{ij} = 1$, $\forall i\in\mathcal{M}$.\\

\item[3)] semi-Markov jump process:\\
For a homogeneous and discrete-time semi-Markov chain, semi-Markov kernel $q$ is defined by
\begin{align}
q_{ij}(k) = \mathbb{P}(\sigma_{n+1} = j, X_{n+1}=k | \sigma_n = i )
\end{align}
where $X_n$ denotes the sojourn time in state $\sigma_n=i$. Note that the transition probability $p_{ij}$ in Markov chain can be expressed in terms of the semi-Markov kernel by $p_{ij} = \sum_{k=0}^{\infty}q_{ij}(k)$.

%\item[4)] Poisson jump process:\\
%The jump linear system is referred to as a Poisson jump linear system if the mode transition probability satisfies $p_{ij}=p_{j}$, $\forall i,j\in\mathcal{M}$, where $p_{ij}$ is defined as \eqref{TransitionProb}. Hence, Bernoulli jump linear system is a particular case of MJLSs.

\end{enumerate}
A SJLS refers to the jump linear system for which jump process $\sigma_k$ is governed by any stochastic probability distribution $\pi(k)$. Consequently, a SJLS implies the jump linear system, where the jump probability distribution $\pi(k)$ forms proper fractional numbers with any arbitrary updating rules for $\pi(k)$.
}

%\begin{remark}(\textbf{Switching rule and switching sequence: nomenclature})
%In this paper, we use the phrases ``switching sequence" and ``switching rule" interchangeably, to mean \emph{time-dependent} switching protocol. Some papers (e.g. \cite{lee2011joint}) make a distinction between the two by reserving the nomenclature ``switching rule" for \emph{state-dependent} protocol while ``switching sequence" denotes \emph{time-dependent} protocol. This distinction stems from a discrete linear inclusion counterexample due to Stanford and Urbano \cite{stanford1994some}, where the system admits a stabilizing state-dependent switching protocol but no time-dependent stabilizing switching sequence. Since this paper exclusively deals with time-dependent protocols, we have no scope for such confusions.
%\label{SequenceVersusRule}
%\end{remark}

{\section{Performance and Robustness Analysis using Wasserstein metric}}
%\section{Robustness to Switching and Initial State Uncertainties}
\label{Perf}
Uncertainties in a SJLS appear at the execution level due to random switching sequence. Additional uncertainties may stem from imprecise setting of initial conditions and parameter values. These uncertainties manifest as the evolution of $\rho\left(x\left(k\right)\right)$. Thus, a natural way to quantify the uncertainty in the performance of a SJLS, is to compute the ``distance" of the instantaneous state PDF from a reference measure. In particular, if we fix the reference PDF as Dirac delta function at the origin, denoted as $\delta\left(x\right)$, then the time-history of this ``distance" would reveal the rate of convergence (divergence) for the stable (unstable) SJLS in the distributional sense.

For meaningful inference, the notion of ``distance" must define a metric, and should be computationally tractable. The choice of the metric is very important as it must be able to highlight  properties of density functions that are important from a dynamical system point of view. We propose that  the shape of  the density functions characterizes the dynamics of the system. Regions of high probability density correspond to high likelihood of finding the state there, which corresponds to higher concentration of trajectories. Higher concentration occurs in regions with low time scale dynamics or time invariance. For example, for a stable system, all trajectories accumulate at the origin and the corresponding PDF is the Dirac delta function at the origin. Similarly, low concentration areas indicate fast-scale dynamics or instability, and the corresponding steady-state density function is zero in the unstable manifold. Therefore, behavior of two dynamical systems are identical in the distribution sense if their state PDFs have identical shapes. {In order to properly capture the above aspects in dynamical systems, we adopt Wasserstein distance and details are introduced in the following subsection}.

{\subsection{Wasserstein distance}}
%\subsection{Robustness in terms of Wasserstein distance}
\begin{definition} (\textbf{Wasserstein distance})
Consider the vectors $x_{1}, x_{2} \in \mathbb{R}^{n}$. Let $\mathcal{P}_{2}(\varsigma_{1},\varsigma_{2})$ denote the collection of all probability measures $\varsigma$ supported on the product space $\mathbb{R}^{2n}$, having finite second moment, with first marginal $\varsigma_{1}$ and second marginal $\varsigma_{2}$. Then the Wasserstein distance of order 2, denoted as ${\mathcal{W}}$, between two probability measures $\varsigma_{1},\varsigma_{2}$, is defined as
\label{Wassdefn}
\begin{align}
&{\mathcal{W}}(\varsigma_{1},\varsigma_{2}) \triangleq \label{W-dist}\\ \nonumber & \left(\displaystyle\inf_{\varsigma\in\mathcal{P}_{2}(\varsigma_{1},\varsigma_{2})}\displaystyle\int_{\mathbb{R}^{2n}} \parallel x_{1}-x_{2}\parallel_{\ell_{2}\left(\mathbb{R}^{n}\right)}^{2} \: d\varsigma(x_{1},x_{2}) \right)
^{\frac{{1}}{2}}.
\end{align}
\end{definition}
\begin{remark}
Intuitively, Wasserstein distance equals the \emph{least amount of work} needed to morph one distributional shape to the other, and can be interpreted as the cost for Monge-Kantorovich optimal transportation plan \cite{villani2003topics}. The particular choice of $\ell_{2}$ norm with order 2 is motivated in \cite{halder2012further}. Further, one can prove (p. 208, \cite{villani2003topics}) that ${\mathcal{W}}$ defines a metric on the manifold of PDFs.
\label{WassRemarkFirst}
\end{remark}

%\begin{remark}
%Given two arbitrary PDFs $\rho_{1}$ and $\rho_{2}$, supported over $\mathbb{R}^{n}$, computing ${\mathcal{W}}$ from (\ref{W-dist}) requires solving a \emph{Hitchcock-Koopmans linear program} (LP), originally formulated in the economics literature \cite{hitchcock1941distribution,koopmans1949optimum,koopmans1951efficient}. In this framework, the PDFs are represented as samples and the sample sizes can be different for each PDF. Another formulation for computing ${\mathcal{W}}$ comes from the computational fluid dynamics community \cite{benamou2000computational}, which involves computation of pressure less potential flow. %As outlined in \cite{halder2012further}, the main computational complexity in solving this LP stems from storage requirement. In particular, if the PDFs $\rho_{1}$ and $\rho_{2}$ are represented as sampled scattered data of cardinality $\nu_{1}$ and $\nu_{2}$, then the sparse storage complexity is $\left(6\nu_{1}\nu_{2} + \left(\nu_{1}+\nu_{2}\right)n + \right.$ $\left.\nu_{1} + \nu_{2}\right)$, and the non-sparse storage complexity is $\left(\nu_{1}+\nu_{2}\right)\left(\nu_{1}\nu_{2} + n + 1\right)$. In other words, the storage complexity for solving the LP at each time, scales \emph{quadratically} with the number of samples (assuming $\nu_{1}, \nu_{2} > n$).
%\label{WassRemarkSecond}
%\end{remark}

Next, we present new results for system stability in terms of ${\mathcal{W}}$ and simplifications in its computation.
 
\begin{proposition}\label{prop:1}
If we fix Dirac distribution as the reference measure, then distributional convergence in Wasserstein metric is \emph{necessary and sufficient} for convergence in m.s. sense.
\label{WConvMeanSqConvProposition}
\end{proposition}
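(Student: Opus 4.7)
The plan is to exploit the fact that fixing one marginal as a Dirac distribution collapses the infimum in the definition of $\mathcal{W}$ to a trivial evaluation. First I would argue that the coupling set $\mathcal{P}_{2}\bigl(\rho(x(k)),\delta(x)\bigr)$ is a singleton: if $\varsigma$ is any joint measure on $\mathbb{R}^{2n}$ whose second marginal is $\delta$ (the Dirac at the origin), then for every measurable $A \subset \mathbb{R}^{n}$ with $0 \notin A$ we have $\varsigma(\mathbb{R}^{n}\times A) = \delta(A) = 0$. Hence $\varsigma$ must be supported on the slice $\mathbb{R}^{n}\times\{0\}$, and disintegrating along that slice forces $\varsigma = \rho(x(k))\otimes\delta(x)$.

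With the coupling pinned down, the Wasserstein integral becomes an elementary expectation. Substituting into (\ref{W-dist}) gives
\begin{align*}
\mathcal{W}^{2}\!\bigl(\rho(x(k)),\delta(x)\bigr) &= \int_{\mathbb{R}^{2n}}\|x_{1}-x_{2}\|_{\ell_{2}}^{2}\,d\bigl(\rho(x(k))\otimes\delta(x)\bigr) \\
&= \int_{\mathbb{R}^{n}}\|x_{1}\|_{\ell_{2}}^{2}\,\rho(x(k))\,dx_{1} \\
&= \mathbb{E}\!\left[\|x(k)\|_{\ell_{2}}^{2}\right].
\end{align*}
Thus $\mathcal{W}\bigl(\rho(x(k)),\delta(x)\bigr) = \sqrt{\mathbb{E}[\|x(k)\|^{2}]}$ exactly, not merely up to equivalence.

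From this identity the biconditional is immediate: mean-square convergence $\mathbb{E}[\|x(k)\|^{2}]\to 0$ holds if and only if $\mathcal{W}\bigl(\rho(x(k)),\delta(x)\bigr)\to 0$, because the square root is a continuous bijection on $[0,\infty)$. Both directions fall out of the same equality, so there is no separate necessity/sufficiency argument to run.

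The only step requiring real care is the singleton-coupling claim; after that, everything is algebra. I would therefore spend the bulk of the write-up making the support argument precise (either via the slice reasoning above or by noting that a measure with a point-mass marginal must be the product coupling), and keep the rest as a one-line computation.
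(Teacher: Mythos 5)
Your proof is correct and follows essentially the same route as the paper: both arguments hinge on the observation that the coupling set $\mathcal{P}_{2}(\rho,\delta)$ is a singleton, which collapses the infimum in the definition of $\mathcal{W}$ and yields the exact identity $\mathcal{W}^{2}(\rho,\delta)=\mathbb{E}[\|X\|_{\ell_2}^{2}]$. You go further than the paper in making the singleton claim rigorous (the slice/support argument for why any coupling with a point-mass marginal must be the product measure), whereas the paper simply asserts it; that extra care is worthwhile. The one genuine divergence is in the converse direction: the paper handles ``m.s.\ convergence $\Rightarrow$ distributional convergence'' by citing a general textbook fact that holds for arbitrary reference measures, while you obtain both implications simultaneously from the single equality $\mathcal{W}(\rho,\delta)=\sqrt{\mathbb{E}[\|X\|^{2}]}$. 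Your route is actually tighter here, since weak convergence alone does not in general imply convergence in the Wasserstein metric (one also needs convergence of second moments), so deriving the converse directly from the exact identity avoids any ambiguity about which notion of ``distributional convergence'' is being invoked.
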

\begin{proof}
Consider a sequence of $n$-dimensional joint PDFs $\{\rho_{j}\left(x\right)\}_{j=1}^{\infty}$, that converges to $\delta\left(x\right)$ in distribution, i.e., $\displaystyle\lim_{j\rightarrow\infty} {\mathcal{W}}\left(\rho_{j}(x), \delta(x)\right) = 0$. From (\ref{W-dist}), we have
\begin{align}
&{\mathcal{W}}^{2}\left(\rho_{j}(x), \delta(x)\right) = \displaystyle\inf_{\varsigma\in\mathcal{P}_{2}(\rho_{j}(x),\delta(x))} \mathbb{E}\left[\parallel X_{j} - 0 \parallel_{\ell_{2}\left(\mathbb{R}^{n}\right)}^{2}\right]\label{IntermedPropWMeanSqConv} \\ \nonumber & = \mathbb{E}\left[\parallel X_{j} \parallel_{\ell_{2}\left(\mathbb{R}^{n}\right)}^{2}\right]
\end{align}
where the random variable $X_{j} \sim \rho_{j}\left(x\right)$, and the last equality follows from the fact that $\mathcal{P}_{2}(\rho_{j}(x),\delta(x)) = \{\rho_{j}(x)\}$ $\forall \: j$, thus obviating the infimum. From (\ref{IntermedPropWMeanSqConv}), $\displaystyle\lim_{j\rightarrow\infty} {\mathcal{W}}\left(\rho_{j}(x), \delta(x)\right) = 0 \Rightarrow \displaystyle\lim_{j\rightarrow\infty} \mathbb{E}\left[\parallel X_{j} \parallel_{\ell_{2}}^{2}\right] = 0$, establishing distributional convergence to $\delta(x) \Rightarrow$ m.s. convergence. Conversely, m.s. convergence $\Rightarrow$ distributional convergence, is well-known \cite{grimmett1992probability} and unlike the other direction, holds for arbitrary reference measure.
\end{proof}

\begin{proposition}(\textbf{$W$ between multivariate Gaussians} \cite{givens1984class})
The Wasserstein distance between two multivariate Gaussians supported on $\mathbb{R}^{n}$, with respective joint PDFs $\mathcal{N}\left(\mu_{1},\Sigma_{1}\right)$ and $\mathcal{N}\left(\mu_{2},\Sigma_{2}\right)$, is given by
\begin{align}
&{\mathcal{W}}\left(\mathcal{N}\left(\mu_{1},\Sigma_{1}\right), \mathcal{N}\left(\mu_{2},\Sigma_{2}\right)\right)=\label{eqn:9} \\ 
&\sqrt{\parallel\mu_{1} - \mu_{2}\parallel_{\ell_{2}\left(\mathbb{R}^{n}\right)}^{2} + \: \text{tr}\left(\Sigma_{1} + \Sigma_{2} - 2 \left[\sqrt{\Sigma_{1}} \Sigma_{2} \sqrt{\Sigma_{1}}\right]^{\frac{1}{2}}\right)}.\nonumber
\end{align}
\end{proposition}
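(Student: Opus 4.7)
\emph{Proof plan.} My strategy is to convert the infimum in (\ref{W-dist}) into a finite-dimensional optimization over the cross-covariance block of a joint Gaussian coupling. Writing $X_i \sim \mathcal{N}(\mu_i,\Sigma_i)$ for $i=1,2$, coupled by some $\varsigma$, expansion of the integrand yields
\begin{align*}
\mathbb{E}_{\varsigma}\!\left[\|X_1-X_2\|^2\right] = \|\mu_1-\mu_2\|^2 + \tr(\Sigma_1) + \tr(\Sigma_2) - 2\tr(C),
\end{align*}
where $C := \mathbb{E}_{\varsigma}[(X_1-\mu_1)(X_2-\mu_2)^\top]$ is the cross-covariance induced by $\varsigma$. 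Only the term $\tr(C)$ depends on the coupling, so the problem reduces to \emph{maximizing} $\tr(C)$ over all admissible cross-covariance matrices.

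I would next characterize admissibility. Any coupling with the prescribed marginals produces a $C$ making the block matrix $\left(\begin{smallmatrix}\Sigma_1 & C\\ C^\top & \Sigma_2\end{smallmatrix}\right)$ positive semidefinite; conversely, given such a $C$, the Gaussian with this block covariance is itself an admissible coupling attaining the same objective. Thus one may restrict to Gaussian couplings without loss of generality---this reduction is the step I expect to pose the main conceptual difficulty, since in principle a non-Gaussian coupling could do better, but here the cost sees $\varsigma$ only through $C$, so replacing $\varsigma$ by the Gaussian with the same $C$ preserves the value and admissibility.

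Assuming $\Sigma_1 \succ 0$, the Schur complement reformulates positive semidefiniteness as $C^\top \Sigma_1^{-1} C \preceq \Sigma_2$. Substituting $C = \Sigma_1^{1/2} M \Sigma_2^{1/2}$ turns the constraint into the operator-norm ball $M^\top M \preceq I$, and $\tr(C) = \tr(\Sigma_2^{1/2}\Sigma_1^{1/2} M)$. The von Neumann trace inequality (equivalently, the polar decomposition of $\Sigma_2^{1/2}\Sigma_1^{1/2}$) then yields $\sup \tr(C) = \tr\!\bigl([\Sigma_1^{1/2}\Sigma_2 \Sigma_1^{1/2}]^{1/2}\bigr)$, with the supremum attained. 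The singular case $\Sigma_1 \succeq 0$ follows by a standard perturbation argument with $\Sigma_1 + \varepsilon I$ and $\varepsilon \downarrow 0$, using continuity of the matrix square root on the PSD cone.

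Assembling the pieces,
\begin{align*}
\mathcal{W}^2 = \|\mu_1-\mu_2\|^2 + \tr\!\bigl(\Sigma_1 + \Sigma_2 - 2[\Sigma_1^{1/2}\Sigma_2 \Sigma_1^{1/2}]^{1/2}\bigr),
\end{align*}
which matches (\ref{eqn:9}) upon taking the square root.
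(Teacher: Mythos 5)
Your argument is correct; note, however, that the paper does not prove this proposition at all---it is stated as a cited result from Givens and Shortt \cite{givens1984class}---and the derivation you give (expand the quadratic cost so that only the cross-covariance $\tr(C)$ depends on the coupling, reduce to Gaussian couplings since the cost sees $\varsigma$ only through second moments, impose positive semidefiniteness of the joint block covariance via the Schur complement, and maximize $\tr(C)$ with von Neumann's trace inequality) is precisely the classical proof found in that reference and in Olkin--Pukelsheim and Dowson--Landau. The only loose end is that your perturbation step treats singular $\Sigma_1$ but the substitution $C=\Sigma_1^{1/2}M\Sigma_2^{1/2}$ with constraint $M^{\top}M\preceq I$ also implicitly assumes $\Sigma_2\succ 0$; a symmetric perturbation $\Sigma_2+\varepsilon I$ (or a range-space argument) closes this, so the gap is cosmetic rather than substantive.
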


\begin{corollary}\label{cor:1}(\textbf{$W$ between Gaussian and Dirac PDF})
Since we can write $\delta\left(x\right) = \displaystyle\lim_{\mu,\Sigma \rightarrow 0} \mathcal{N}\left(\mu,\Sigma\right)$ (see e.g., p. 160-161, \cite{hassani1999mathematical}), it follows from \eqref{eqn:9} that
\begin{eqnarray}
{\mathcal{W}}\left(\mathcal{N}\left(\mu,\Sigma\right), \delta\left(x\right)\right) = \sqrt{\parallel \mu \parallel_{\ell_{2}\left(\mathbb{R}^{n}\right)}^{2} + \: \text{tr}\left(\Sigma\right)}.\label{GaussianDiracW}
\end{eqnarray}

\label{GaussToDiracCorollary}
\end{corollary}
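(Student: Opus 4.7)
My plan is to follow exactly the hint given in the statement: start from the closed-form Wasserstein expression (\ref{eqn:9}) between two Gaussians and pass to the limit $\mu_{2},\Sigma_{2}\rightarrow 0$, using the identity $\delta(x)=\lim_{\mu_{2},\Sigma_{2}\rightarrow 0}\mathcal{N}(\mu_{2},\Sigma_{2})$ cited from Hassani. Concretely, I would set $\mu_{1}=\mu$, $\Sigma_{1}=\Sigma$ in (\ref{eqn:9}) and examine each of the three terms inside the square root separately: $\|\mu_{1}-\mu_{2}\|_{\ell_{2}}^{2}$, $\operatorname{tr}(\Sigma_{1}+\Sigma_{2})$, and $\operatorname{tr}\bigl(2[\sqrt{\Sigma_{1}}\,\Sigma_{2}\sqrt{\Sigma_{1}}]^{1/2}\bigr)$.

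The first two terms are trivial: by continuity of the norm and the trace, $\|\mu-\mu_{2}\|_{\ell_{2}}^{2}\rightarrow\|\mu\|_{\ell_{2}}^{2}$ and $\operatorname{tr}(\Sigma+\Sigma_{2})\rightarrow\operatorname{tr}(\Sigma)$ as $\mu_{2},\Sigma_{2}\rightarrow 0$. The only non-routine piece is to argue that the cross-covariance term vanishes in the limit, i.e., $\operatorname{tr}\bigl([\sqrt{\Sigma}\,\Sigma_{2}\sqrt{\Sigma}]^{1/2}\bigr)\rightarrow 0$. This follows because the matrix square root is continuous on the cone of symmetric positive semidefinite matrices, so $\Sigma_{2}\rightarrow 0$ implies $\sqrt{\Sigma}\,\Sigma_{2}\sqrt{\Sigma}\rightarrow 0$, hence its principal square root tends to the zero matrix, and the trace is continuous. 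Collecting these three limits under the square root, which is itself continuous on $[0,\infty)$, yields $\sqrt{\|\mu\|_{\ell_{2}(\mathbb{R}^{n})}^{2}+\operatorname{tr}(\Sigma)}$, which is exactly (\ref{GaussianDiracW}).

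As an independent sanity check and alternative proof, I would use the same observation already exploited in the proof of Proposition \ref{prop:1}: since $\delta(x)$ is concentrated at a single point, the set $\mathcal{P}_{2}(\mathcal{N}(\mu,\Sigma),\delta(x))$ contains only one coupling, namely the product $\mathcal{N}(\mu,\Sigma)\otimes\delta(x)$. The infimum in (\ref{W-dist}) therefore disappears, and $\mathcal{W}^{2}(\mathcal{N}(\mu,\Sigma),\delta(x))=\mathbb{E}[\|X\|_{\ell_{2}}^{2}]$ with $X\sim\mathcal{N}(\mu,\Sigma)$. A direct computation of the second moment of a Gaussian vector gives $\mathbb{E}[\|X\|_{\ell_{2}}^{2}]=\|\mu\|_{\ell_{2}}^{2}+\operatorname{tr}(\Sigma)$, recovering the same formula without ever invoking the Gaussian-to-Gaussian result.

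The main obstacle, if any, is the limit argument for the cross term $[\sqrt{\Sigma}\,\Sigma_{2}\sqrt{\Sigma}]^{1/2}$, and even this is mild because it reduces to continuity of the principal square root on the PSD cone. Because of the availability of the alternative singleton-coupling argument, the corollary is essentially a one-line consequence of Proposition \ref{prop:1}'s calculation and does not require delicate analysis.
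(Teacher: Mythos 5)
Your main argument is exactly the paper's: the corollary is justified by taking the limit $\mu_{2},\Sigma_{2}\rightarrow 0$ in \eqref{eqn:9}, and your treatment of the cross term via continuity of the principal square root on the PSD cone correctly fills in the only nontrivial detail. Your alternative singleton-coupling computation is also valid and is in fact the same device the paper uses in the proof of Proposition \ref{prop:1}, so either route suffices.
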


%\begin{corollary}(\textbf{$W$ between arbitrary and Dirac PDF})
%Given an arbitrary PDF $\rho\left(x\right)$ over $\mathbb{R}^{n}$, it follows from (\ref{W-dist}) that $W^{2}\left(\rho(x),\delta(x)\right) = \displaystyle\int_{\mathbb{R}^{n}} \parallel x \parallel_{\ell_{2}\left(\mathbb{R}^{n}\right)}^{2} \: \rho(x) dx$. If the arbitrary distribution is available as scattered data $\{x_{i}, \varsigma_{i}\}_{i=1}^{\nu}$, then we can algebraically compute the empirical $W$ as $\left(\displaystyle\sum_{i=1}^{\nu} \parallel x_{i} \parallel_{\ell_{2}\left(\mathbb{R}^{n}\right)}^{2} \varsigma_{i}\right)^{\frac{1}{2}}$. 
%\label{ArbitToDiracCorollary}
%\end{corollary}

{\subsection{Performance and Robustness Analysis for SJLSs}}
%\subsection{Robustness Analysis for SJLS}\label{sec3-2:robustness_anal.}
The {performance and} robustness analysis problem for the SJLS is stated as follows: given a SJLS $\left(\pi\left(k\right),A_{\sigma_{k}}\left(x(k)\right),\mathcal{M}\right)$, compute and analyse the performance history, quantified by $W\left(k\right) \triangleq {\mathcal{W}}\left(\rho\left(x(k)\right),\delta(x)\right)$. Comparison of $W(k)$ of uncertain systems with that of a nominal system, quantifies the degradation in system performance due to system uncertainty. 

\subsubsection{Uncertainty propagation in SJLSs}\label{sec:UQLS}
The key difficulty here is the propagation of state PDFs under the stochastic switching and we present a new algorithm for such computations.
\begin{proposition}\label{prop.3}
Given $m$ absolutely continuous random variables $X_{1}, \hdots, X_{m}$, with respective cumulative distribution functions (CDFs) $F_{i}\left(x\right)$, and PDFs $\rho_{i}\left(x\right)$, $\forall i \in \mathcal{M}$. Let $X \triangleq X_{i}$, with probability $\alpha_{i} \in [0,1]$, $\displaystyle\sum_{i=1}^{m} \alpha_{i} = 1$. Then, the CDF and PDF of $X$ are given by $F\left(x\right) = \displaystyle\sum_{i=1}^{m} \alpha_{i} F_{i}\left(x\right)$, and $\rho\left(x\right) = \displaystyle\sum_{i=1}^{m} \alpha_{i} \rho_{i}\left(x\right)$.
\label{RandomVarProposition}
\end{proposition}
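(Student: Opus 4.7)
The plan is to obtain the CDF formula from the law of total probability and then differentiate, using the absolute continuity hypothesis to recover the PDF identity. The statement $X \triangleq X_i$ with probability $\alpha_i$ is naturally formalized by introducing an auxiliary discrete selector random variable $I$ on $\mathcal{M}$, independent of $X_1,\dots,X_m$, with $\mathbb{P}(I=i)=\alpha_i$, and setting $X = X_I$. This makes the mixture construction precise and lets the standard conditioning tools apply.

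First I would write, for an arbitrary $x\in\mathbb{R}$,
\begin{align*}
F(x) = \mathbb{P}(X\leq x) = \sum_{i=1}^{m}\mathbb{P}(X\leq x\mid I=i)\,\mathbb{P}(I=i),
\end{align*}
using the partition $\{I=i\}_{i=1}^{m}$ and the fact that $\sum_i\alpha_i=1$ guarantees this is indeed a partition of the sample space. Under the event $\{I=i\}$ we have $X=X_i$, so the conditional probability collapses to $F_i(x)$, yielding $F(x)=\sum_{i=1}^{m}\alpha_i F_i(x)$. This step is the conceptual core; the obstacle, if any, is purely notational — making sure the selector $I$ is well-defined and independent of the $X_i$, which is a harmless assumption since the proposition only constrains the marginal mixture law, not the joint dependence structure.

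Next, to pass from CDF to PDF, I would invoke the hypothesis that each $X_i$ is absolutely continuous, so every $F_i$ is differentiable almost everywhere with $F_i'(x)=\rho_i(x)$. A finite convex combination of absolutely continuous functions is absolutely continuous, hence $F$ is differentiable a.e., and term-by-term differentiation of the finite sum is justified. This yields
\begin{align*}
\rho(x) = \frac{dF(x)}{dx} = \sum_{i=1}^{m}\alpha_i\,\frac{dF_i(x)}{dx} = \sum_{i=1}^{m}\alpha_i\,\rho_i(x),
\end{align*}
which establishes the second identity.

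Finally, I would note (as a sanity check rather than a separate argument) that $\rho$ so defined is indeed a PDF: it is nonnegative because each $\rho_i\geq 0$ and $\alpha_i\geq 0$, and integrates to $\sum_i\alpha_i\int\rho_i = \sum_i\alpha_i = 1$. No technically hard step arises — the result is essentially a restatement of the law of total probability for mixture distributions — so the only care needed is in cleanly introducing the selector $I$ and citing absolute continuity to justify differentiation under the finite sum.
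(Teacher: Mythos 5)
Your proof is correct and follows essentially the same route as the paper's: the law of total probability gives the CDF identity, and absolute continuity justifies passing to the density. Your explicit introduction of the selector variable $I$ and the term-by-term differentiation argument merely make rigorous what the paper states more tersely.
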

\begin{proof}
$F\left(x\right) \triangleq \mathbb{P}\left(X \leq x\right) = \displaystyle\sum_{i=1}^{m} \mathbb{P}\left(X=X_{i}\right) \mathbb{P}\left(X_{i} \leq x\right)$ $= \displaystyle\sum_{i=1}^{m} \alpha_{i} F_{i}\left(x\right)$, where we have used the law of total probability. Since each $X_{i}$ and hence $X$, is absolutely continuous, we have $\rho\left(x\right) = \displaystyle\sum_{i=1}^{m} \alpha_{i} \rho_{i}\left(x\right)$.
\end{proof}
{Note that} any continuous PDF can be approximated by a Gaussian mixture PDF in weak sense \cite{titterington1985statistical,scott1992multivariate}. {Therefore,} we assume the initial PDF for the SJLS to be $m_{0}$ components mixture of Gaussian (MoG), given by $\rho_{0} = \displaystyle\sum_{j_{0}=1}^{m_{0}} \alpha_{j_{0}} \:\mathcal{N}\left(\mu_{j_{0}},\Sigma_{j_{0}}\right)$, $\displaystyle\sum_{j_{0}=1}^{m_{0}} \alpha_{j_{0}} = 1$. Then, we have the following results.

\begin{theorem} (\textbf{A SJLS preserves MoG})
Consider a SJLS $\left(\pi\left(k\right), \{A_{j}\}_{j=1}^{m}, \mathcal{M}\right)$ with initial PDF $\rho_{0} = \displaystyle\sum_{j_{0}=1}^{m_{0}} \alpha_{j_{0}} \:\mathcal{N}\left(\mu_{j_{0}},\Sigma_{j_{0}}\right)$. Then the state PDF at time $k$, denoted by $\rho\left(x(k)\right)$, is given by
{
\begin{align}
\rho\left(x(k)\right) &= \displaystyle\sum_{j_{k}=1}^{m}\displaystyle\sum_{j_{k-1}=1}^{m} \hdots \displaystyle\sum_{j_{1}=1}^{m} \displaystyle\sum_{j_{0}=1}^{m_{0}} \left(\prod_{r=1}^{k} \pi_{j_r}(r)\right)  \nonumber\\ 
&\left.\quad\alpha_{j_{0}}\mathcal{N}\left(\mu_{j_k},\Sigma_{j_k}\right)\right.
\label{dtSJLSstatePDF}
\end{align}
where $\mu_{j_k}=A_{j_k}^*\mu_{j_{0}}$, $\Sigma_{j_k}=A_{j_k}^*\Sigma_{j_{0}}A_{j_k}^{*{\top}}$ and $\displaystyle A_{j_{k}}^*\triangleq \prod_{r=k}^{1}A_{j_r}=A_{j_k}A_{j_{k-1}}\hdots A_{j_{2}}A_{j_{1}}$.}
\label{UncPropThmdtSJLS}
\end{theorem}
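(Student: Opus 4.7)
The plan is to prove the formula by induction on $k$, leveraging two facts: (i) Proposition \ref{prop.3} (the mixture rule from the law of total probability), and (ii) the invariance of the Gaussian family under linear maps, namely that if $X \sim \mathcal{N}(\mu, \Sigma)$ and $Y = AX$, then $Y \sim \mathcal{N}(A\mu, A\Sigma A^{\top})$. For the base case $k=0$, the empty product $\prod_{r=1}^{0} \pi_{j_r}(r)$ equals $1$ and the empty matrix product $A_{j_0}^{*}$ equals the identity, so \eqref{dtSJLSstatePDF} reduces to the prescribed initial MoG $\rho_{0}$.

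For the inductive step, assume the claim holds at time $k$. The dynamics in (\ref{dtMJLS}) give $x(k+1) = A_{\sigma_k} x(k)$, where the mode index takes value $j_{k+1} \in \mathcal{M}$ with probability $\pi_{j_{k+1}}(k+1)$. Conditioning on this selected mode, each Gaussian component $\mathcal{N}(\mu_{j_k}, \Sigma_{j_k})$ appearing in the inductive hypothesis is mapped under the linear transformation $A_{j_{k+1}}$ into $\mathcal{N}\bigl(A_{j_{k+1}} \mu_{j_k}, A_{j_{k+1}} \Sigma_{j_k} A_{j_{k+1}}^{\top}\bigr)$. Using the telescoping identity $A_{j_{k+1}} A_{j_k}^{*} = A_{j_{k+1}}^{*}$, these updated first and second moments precisely match the pattern asserted at level $k+1$. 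An application of Proposition \ref{prop.3} with mixing weights $\pi_{j_{k+1}}(k+1)$ then appends one more factor to the existing product of occupation probabilities and introduces one additional summation $\sum_{j_{k+1}=1}^{m}$, producing the claimed expression at time $k+1$.

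The chief obstacle I foresee is purely notational: one must carefully track the nested index structure and confirm that the accumulated matrix products collapse into $A_{j_{k+1}}^{*}$ in the correct right-to-left order. A more delicate conceptual point is the tacit independence implicit in writing the joint law of the mode sequence as $\prod_{r=1}^{k} \pi_{j_r}(r)$; for an i.i.d.\ jump process this is immediate, while for a Markov or semi-Markov chain one should interpret each factor as the mode-occupation probability along the sample path (equivalently, replace the product by the appropriate joint distribution of $(\sigma_1,\dots,\sigma_k)$) so that the mixture weights remain valid probabilities summing to one.
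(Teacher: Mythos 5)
Your induction using Proposition \ref{prop.3} together with the closure of Gaussians under linear maps is exactly the argument the paper gives (the paper simply writes out the $k=1$ and $k=2$ steps explicitly and then says ``continuing with this recursion''), so the proposal is correct and takes essentially the same route. Your closing caveat about the product $\prod_{r=1}^{k}\pi_{j_r}(r)$ tacitly assuming independence of the mode selections across time is a fair observation that the paper leaves implicit.
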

\begin{proof}
Starting from $\rho_{0}$ at $k=0$, the modal PDF at time $k=1$, is given by
{
\begin{align}
\rho_{j_1}(x(1)) &= \displaystyle\sum_{j_{0}=1}^{m_{0}}\alpha_{j_{0}} \:\mathcal{N}\left(\mu_{j_1}, \Sigma_{j_1}\right)
\label{ModalPDFatTime1}
\end{align}
where $j_1=1,\cdots ,m$, $\mu_{j_1} = A_{j_1}\mu_{j_0}$, and $\Sigma_{j_1} = A_{j_1}\Sigma_{j_{0}}A_{j_1}^{\top}$,}
which follows from the fact that linear transformation of an MoG is an equal component MoG with linearly transformed component means and congruently transformed component covariances (see Theorem 6 and Corollary 7 in \cite{ali2011convergence}). From Proposition \ref{RandomVarProposition}, it follows that the state PDF at $k=1$, is
{
\begin{align}
\rho(x(1)) = \displaystyle\sum_{j_{1}=1}^{m}\displaystyle\sum_{j_{0}=1}^{m_{0}} \pi_{j_{1}}(1) \alpha_{j_{0}} \:\mathcal{N}\left(\mu_{j_1},\Sigma_{j_1}\right)
\label{dtSJSPDFatTime1}
\end{align}
}
where $\pi_{j_{1}}(1)$ is the occupation probability for mode $j_{1}$ at time $k=1$. Notice that (\ref{dtSJSPDFatTime1}) is an MoG with $m m_{0}$ component Gaussians. Proceeding likewise from this $\rho(x(1))$, we obtain
{
\begin{align}
&\rho_{j_2}(x(2)) = \displaystyle\sum_{j_{1}=1}^{m}\displaystyle\sum_{j_{0}=1}^{m_{0}}\pi_{j_{1}}(1) \alpha_{j_{0}} \:\mathcal{N}\big(\mu_{j_2},\Sigma_{j_2}\big)\\ 
&\text{where }j_2=1,\hdots,m,\: \mu_{j_2}=(A_{j_2}A_{j_{1}})\mu_{j_{0}},\nonumber\\
&\qquad\qquad\qquad\qquad\quad\:\Sigma_{j_2}=(A_{j_2}A_{j_{1}})\Sigma_{j_{0}}(A_{j_2}A_{j_{1}})^{\top},\nonumber\\
&\rho(x(2))=\displaystyle\sum_{j_{2}=1}^{m}\displaystyle\sum_{j_{1}=1}^{m}\displaystyle\sum_{j_{0}=1}^{m_{0}}\pi_{j_{2}}(2)\pi_{j_{1}}(1) \alpha_{j_{0}} \:\mathcal{N}\big(\mu_{j_2},\Sigma_{j_2}\big).
%&\text{where }\mu_{j_2}=(A_{j_{2}}A_{j_{1}})\mu_{j_{0}}, \Sigma_{j_2}=(A_{j_{2}}A_{j_{1}})\Sigma_{j_{0}}(A_{j_{2}}A_{j_{1}})^{\top}.\nonumber
\label{ModalStatePDFatTime2}
\end{align}
}
Continuing with this recursion till time $k$, we arrive at (\ref{dtSJLSstatePDF}), which is an MoG with $m^{k}m_{0}$ components. We comment that the expression simplifies for $m_{0} = 1$, i.e. when the initial PDF is Gaussian.
\end{proof}
\begin{remark}
(\textbf{Computational complexity}) Given an initial MoG and a SJLS, from Theorem \ref{UncPropThmdtSJLS}, one can in principle compute the state PDF at any {finite} time, in closed form {(i.e., an analytical form with a finite number of well-defined functions)}. However, since the number of component Gaussians grow{s} exponentially in time, the computational complexity in evaluating (\ref{dtSJLSstatePDF}), grows exponentially, and hence the computation becomes intractable. In the following, we show that the Wasserstein based performance analysis can still be performed in closed form while keeping the computational complexity constant in time.
\label{ComplexityRemark}
\end{remark}

\subsubsection{$Wasserstein$ computation in SJLSs}\label{sec:3.2.2}
For a SJLS, there are no known results to represent the $W$ distance in closed form. The main computational issue is that even with Gaussian initial PDF, the instantaneous state PDF remains no longer Gaussian but rather MoG, as shown in Theorem \ref{UncPropThmdtSJLS}. This brings forth concerns for the exponential growth of computational complexity to obtain $\rho(x(k))$.
{To address these concerns, we firstly introduce a following theorem that enables the Wasserstein computation in an analytical form. Then, we further show that the exponential growth can be obviated by the proposed algorihm.}
 %However, we introduce new results that show the existence of the analytical solution in the exact closed form to cope with above concerns for the robustness analysis of SJLS. %We need the following lemma, proposition, and theorems to address these concerns.

{
\begin{theorem} (\textbf{$W$ for an $m$-mode SJLS with Dirac reference PDF})
At any given time $k$, let the state PDF for a SJLS be $\rho(x) = \displaystyle\sum_{j=1}^{m} \alpha_{j} \rho_{j}(x)$, $x \in \mathbb{R}^{n}$ where $\rho_{j}(x)$, $\alpha_{j}$, and $m$ are the instantaneous modal PDF, time-varying occupation probability of mode $j$, and the number of individual mixture components, respectively. If we define $W \triangleq {\mathcal{W}}\big(\rho\left(x\right), \delta(x)\big)$, and $W_{j} \triangleq {\mathcal{W}}\big(\rho_{j}\left(x\right), \delta(x)\big)$, then
\begin{equation}
W = \left(\displaystyle\sum_{j=1}^{m} \alpha_{j} W_{j}^{2}\right)^{1/2}.
\label{WrelationSJS}
\end{equation}
\label{WassThmSJSwithDirac}
\end{theorem}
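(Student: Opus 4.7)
The plan is to exploit the simplification already observed in the proof of Proposition \ref{WConvMeanSqConvProposition}: whenever one of the two measures is the Dirac delta $\delta(x)$, the collection of couplings $\mathcal{P}_{2}(\rho,\delta)$ is a singleton (the product measure $\rho\otimes\delta$), so the infimum in \eqref{W-dist} is vacuous and $\mathcal{W}^{2}(\rho,\delta)=\mathbb{E}_{X\sim\rho}\bigl[\parallel X\parallel_{\ell_{2}}^{2}\bigr]$. Applying this to both $W$ and each $W_{j}$ reduces the claim from a statement about optimal transport to an elementary identity about second moments of a mixture distribution, bypassing the infimum entirely.

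First, I would write $W^{2}=\int_{\mathbb{R}^{n}}\parallel x\parallel_{\ell_{2}}^{2}\,\rho(x)\,dx$ and, for every $j\in\mathcal{M}$, $W_{j}^{2}=\int_{\mathbb{R}^{n}}\parallel x\parallel_{\ell_{2}}^{2}\,\rho_{j}(x)\,dx$. Second, I would substitute the mixture decomposition $\rho(x)=\sum_{j=1}^{m}\alpha_{j}\rho_{j}(x)$, which is guaranteed by Proposition \ref{RandomVarProposition}, into the integral for $W^{2}$. Third, since the sum is finite, I can interchange it with the integral to obtain $W^{2}=\sum_{j=1}^{m}\alpha_{j}\int_{\mathbb{R}^{n}}\parallel x\parallel_{\ell_{2}}^{2}\,\rho_{j}(x)\,dx=\sum_{j=1}^{m}\alpha_{j}W_{j}^{2}$. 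Taking the positive square root yields \eqref{WrelationSJS}.

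There is essentially no technical obstacle here; the only item worth checking is that each $\rho_{j}$ has finite second moment so that $W_{j}<\infty$ and the interchange of sum and integral is justified. In the SJLS setting of Theorem \ref{UncPropThmdtSJLS}, every modal PDF is Gaussian (or a finite Gaussian mixture), so finiteness of second moments is automatic and the exchange is trivially legal. The conceptual significance of the result, rather than its proof, is what matters: it collapses the Wasserstein distance on an MoG with potentially $m^{k}m_{0}$ components into a convex combination of the squared component-wise distances $W_{j}^{2}$, each of which admits the closed form \eqref{GaussianDiracW}. This is precisely the lever that will be used to defuse the exponential blow-up flagged in Remark \ref{ComplexityRemark}.
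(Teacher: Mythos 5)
Your proposal is correct and follows essentially the same route as the paper: reduce $\mathcal{W}^{2}(\rho,\delta)$ to the second moment $\int \parallel x\parallel_{\ell_{2}}^{2}\rho(x)\,dx$ (the coupling with a Dirac reference being unique), substitute the mixture decomposition, and exchange the finite sum with the integral. Your added remark on finiteness of the modal second moments is a minor but legitimate tightening of the same argument.
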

}
\begin{figure*}
\begin{center}
\includegraphics[width=\textwidth]{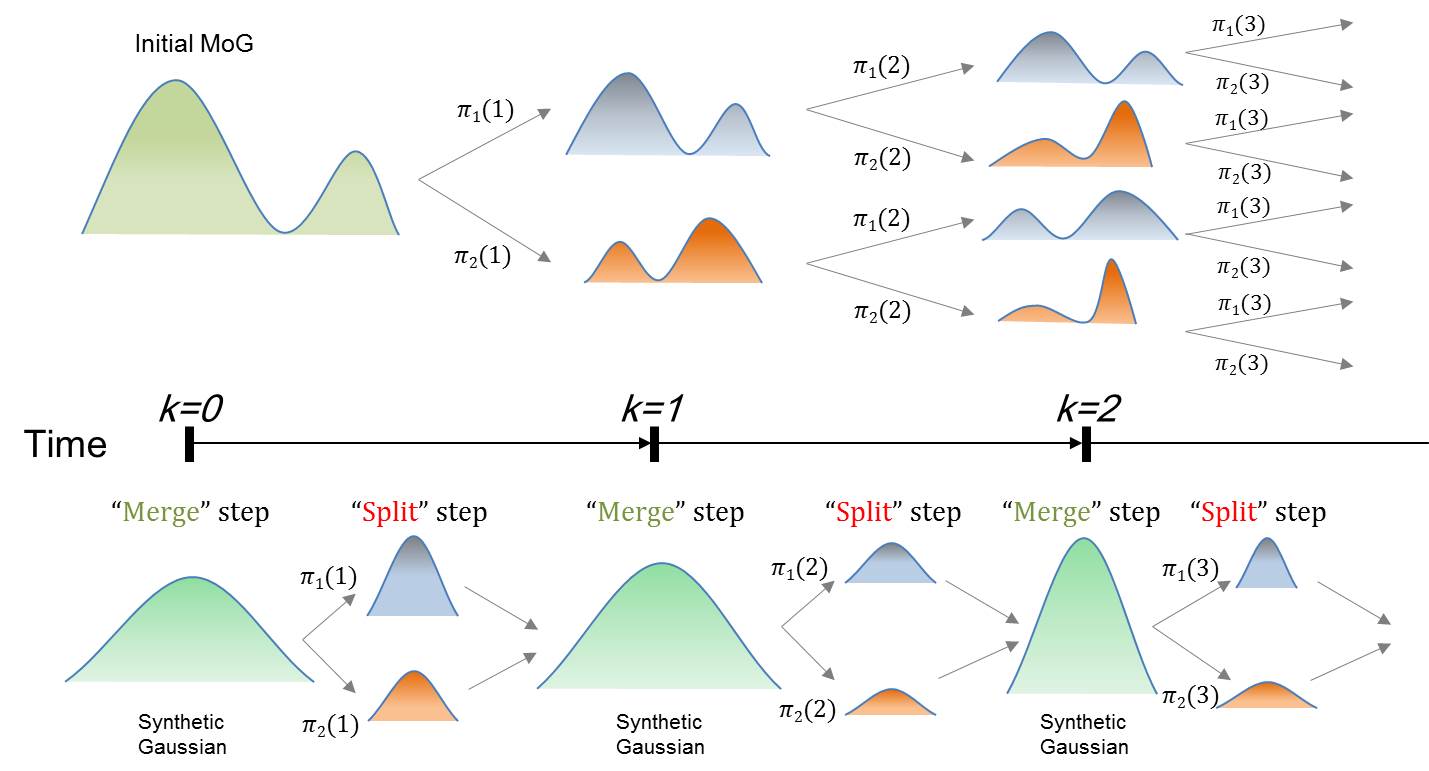}
\caption{Schematic of PDFs propagation for SJLS. Initially, an MoG PDF was given; Upper one shows the exponential growth of MoG components; Bottom one shows ``Split-and-Merge'' algorithm and the number of Gaussian components remains constatnt, which is $m$ modes at most. In this figure, $m=2$.}\label{fig_tree}
\end{center}
\end{figure*}

{
\begin{proof}
From (\ref{W-dist}) and Proposition \ref{prop.3}, we have
\begin{eqnarray}
\qquad\qquad \nonumber W^{2} &=& \displaystyle\int_{\mathbb{R}^{n}} \parallel x \parallel_{\ell_{2}\left(\mathbb{R}^{n}\right)}^{2} \rho(x) dx\\
&=& \displaystyle\int_{\mathbb{R}^{n}} \parallel x \parallel_{\ell_{2}\left(\mathbb{R}^{n}\right)}^{2} \displaystyle\sum_{j=1}^{m} \alpha_{j} \rho_{j}(x) dx \nonumber \\
&=& \displaystyle\sum_{j=1}^{m} \alpha_{j} \displaystyle\int_{\mathbb{R}^{n}} \parallel x \parallel_{\ell_{2}\left(\mathbb{R}^{n}\right)}^{2} \rho_{j}(x) dx \nonumber \\ 
&=& \displaystyle\sum_{j=1}^{m} \alpha_{j} W_{j}^{2}.
\end{eqnarray}
\begin{eqnarray}
\qquad\qquad\qquad\Rightarrow W = \left(\displaystyle\sum_{j=1}^{m} \alpha_{j} W_{j}^{2}\right)^{1/2}.\label{WassAsConvexSumOfComponentWass} 
\end{eqnarray}
%To prove the inequality relation, we invoke Proposition \ref{WassPropositionGaussNonGaussInequality} and recall that $\delta\left(x\right) = \displaystyle\lim_{\mu,\Sigma \rightarrow 0} \mathcal{N}\left(\mu,\Sigma\right)$ (Corollary \ref{GaussToDiracCorollary}). This results in $\widehat{W} \leq W$, and we conclude the proof.
\end{proof}
}

{
Theorem \ref{WassThmSJSwithDirac} provides an analytical solution to compute the performance and the robustness of the SJLS in terms of Wasserstein distance. However, expression in \eqref{WrelationSJS} still includes the component-wise $W$ computation, and hence the computation becomes intractable shortly due to the exponential growth of Gaussian components in the state PDF $\rho(x)$. In order to cope with this problem, we introduce a \textbf{``Split-and-Merge''} algorithm as follows.
}

{
\textbf{1) Merge Step:}\\
For a given MoG $\rho(x)$ at any time $k$, we can compute the mean $\widehat{\mu}$ and covariance $\widehat{\Sigma}$ of an MoG by the following lemma.}

\begin{lemma} (\textbf{Mean and covariance of a mixture PDF})\label{lemma:1}
Consider any mixture PDF $\rho(x) = \displaystyle\sum_{j=1}^{m} \alpha_{j} \rho_{j}(x)$, with component mean-covariance pairs $\left(\mu_{j},\Sigma_{j}\right)$, $j=1,\hdots,m$. Then, the mean-covariance pair $\left(\widehat{\mu},\widehat{\Sigma}\right)$ for the mixture PDF $\rho(x)$, is given by
\begin{align}
\widehat{\mu} = \sum_{j=1}^{m}\alpha_{j} \mu_{j}, \,\widehat{\Sigma} = \sum_{j=1}^{m} \alpha_{j}\left(\Sigma_{j} + \left(\mu_{j}-\widehat{\mu}\right)\left(\mu_{j}-\widehat{\mu}\right)^{\top}\right).
\label{MeanCovHat}
\end{align}
\label{MeanCovMixPDF}
\end{lemma}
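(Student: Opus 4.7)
The plan is to prove both formulas directly from the definitions of mean and covariance, using the linearity of the integral with respect to the mixture weights $\alpha_j$ and the known moments of each component $\rho_j$.

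First I would handle the mean. By definition, $\widehat{\mu} = \int_{\mathbb{R}^n} x\,\rho(x)\,dx$. Substituting $\rho(x) = \sum_{j=1}^{m}\alpha_j \rho_j(x)$ and interchanging the finite sum with the integral, I get $\widehat{\mu} = \sum_{j=1}^{m}\alpha_j \int_{\mathbb{R}^n} x\,\rho_j(x)\,dx = \sum_{j=1}^{m}\alpha_j \mu_j$, using the fact that $\mu_j$ is the mean of component $\rho_j$. This step is essentially immediate.

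Next I would compute the covariance. Starting from $\widehat{\Sigma} = \int_{\mathbb{R}^n} (x-\widehat{\mu})(x-\widehat{\mu})^{\top}\rho(x)\,dx$, the key algebraic trick is to insert and subtract the component mean, writing $x - \widehat{\mu} = (x-\mu_j) + (\mu_j-\widehat{\mu})$, and then expand the outer product into four terms inside the sum over $j$. The two diagonal terms yield $\sum_j \alpha_j \Sigma_j$ (from $\int (x-\mu_j)(x-\mu_j)^{\top}\rho_j\,dx = \Sigma_j$) and $\sum_j \alpha_j (\mu_j-\widehat{\mu})(\mu_j-\widehat{\mu})^{\top}$ (since $\int \rho_j\,dx = 1$). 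The two cross terms, of the form $(\mu_j-\widehat{\mu})\int (x-\mu_j)^{\top}\rho_j(x)\,dx$ and its transpose, both vanish because $\int (x-\mu_j)\rho_j(x)\,dx = 0$ by the definition of $\mu_j$. Adding up the surviving pieces gives the claimed formula.

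There is essentially no real obstacle here; the only thing that requires care is bookkeeping in the expansion of the covariance, namely making sure the cross terms are correctly identified as zero and that the two surviving terms combine into the single sum $\sum_j \alpha_j\bigl(\Sigma_j + (\mu_j-\widehat{\mu})(\mu_j-\widehat{\mu})^{\top}\bigr)$. Since the result does not depend on Gaussianity of the components, the same proof applies to any mixture with finite second moments, which is exactly the generality claimed in the lemma.
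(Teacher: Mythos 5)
Your proof is correct and follows essentially the same route as the paper: linearity of the integral for the mean, and an add-and-subtract expansion of the outer product for the covariance. The only cosmetic difference is that you center at the component mean $\mu_j$ (so the cross terms vanish because $\int (x-\mu_j)\rho_j(x)\,dx = 0$), whereas the paper centers at $\widehat{\mu}$ and cancels the cross terms against $-\widehat{\mu}\widehat{\mu}^{\top}$; both computations are equivalent.
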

\begin{proof}
We have $\widehat{\mu} \triangleq \displaystyle\int_{\mathbb{R}^{n}} x \rho(x) dx = \displaystyle\int_{\mathbb{R}^{n}} x \displaystyle\sum_{j=1}^{m} \alpha_{j}$  $\rho_{j}(x)dx =\displaystyle\sum_{j=1}^{m} \alpha_{j} \displaystyle\int_{\mathbb{R}^{n}} x \rho_{j}(x) dx = \displaystyle\sum_{j=1}^{m}\alpha_{j} \mu_{j}$.

On the other hand, $\widehat{\Sigma} \triangleq \mathbb{E}\left[\left(x-\widehat{\mu}\right) \left(x-\widehat{\mu}\right)^{\top}\right] = \mathbb{E}\left[x x^{\top}\right] - \widehat{\mu}\widehat{\mu}^{\top} = \displaystyle\int_{\mathbb{R}^{n}} x x^{\top} \displaystyle\sum_{j=1}^{m} \alpha_{j} \rho_{j}(x) dx - \widehat{\mu}\widehat{\mu}^{\top} = \displaystyle\sum_{j=1}^{m} \alpha_{j} \displaystyle\int_{\mathbb{R}^{n}} \left(x - \widehat{\mu} + \widehat{\mu}\right) \left(x - \widehat{\mu} + \widehat{\mu}\right)^{\top} \rho_{j}\left(x\right) dx - \widehat{\mu}\widehat{\mu}^{\top} = \displaystyle\sum_{j=1}^{m}\alpha_{j}\left(\Sigma_{j}+\left(\mu_{j}-\widehat{\mu}\right)\left(\mu_{j}-\widehat{\mu}\right)^{\top}\right)$.
\end{proof}
{
Lemma \ref{MeanCovMixPDF} proves that for any mixture PDF, we can compute the mean $\widehat{\mu}$ and covariance $\widehat{\Sigma}$. From the computed $\widehat{\mu}(k)$ and $\widehat{\Sigma}(k)$ at time $k$, we construct a synthetic Gaussian $\mathcal{N}(\widehat{\mu}(k),\widehat{\Sigma}(k))$ to merge the state PDF of an MoG form into a single Gaussian PDF.}

{
\textbf{2) Split Step:}\\
Once the synthetic Gaussian $\mathcal{N}(\widehat{\mu}(k),\widehat{\Sigma}(k))$ is obtained at time $k$ from ``Merge step", we proceed the propagation of the modal PDF for the next time step along mode dynamics $\{A_j\}_{j=1}^{m}$. Consequently, we have $m$ numbers of Gaussian components $\mathcal{N}(A_j\widehat{\mu}(k), A_j\widehat{\Sigma}(k)A_j^{\top})$, $j=1,2,\hdots,m$ at time $k+1$.}
%According to Proposition \ref{prop.3} with switching probability $\pi(k+1)$, the state PDF at time $k+1$ forms a new MoG expressed by $\widehat{\rho}(x(k+1)) = \sum_{j=1}^{m}\pi_j(k+1)\mathcal{N}(A_j\widehat{\mu}(k), A_j\widehat{\Sigma}(k){A_j}^{\top})$.
%Since the propagation started from a synthetic Gaussian $\mathcal{N}(\widehat{\mu},\widehat{\Sigma})$, we have $m$ number of Gaussian components at time $k+1$, which forms an MoG.}

{
Repeating ``Split-and-Merge" algorithm at every time step as depicted by Fig. \ref{fig_tree}, linear modal dynamics results in $m$ modal Gaussian PDFs (``Split step"). Then, instead of computing the non-Gaussian SJLS state PDF in an MoG form, one would construct a synthetic Gaussian $\mathcal{N}(\widehat{\mu},\widehat{\Sigma})$ (``Merge step") and repeat thereafter. %Thus, at any time $k$, we only have $m$ mean vectors and covariance matrices to work with.

Although the ``Split-and-Merge" algorithm obviate the need to compute the state PDF $\rho(x)$ where Gaussian components grow exponentially, the synthetic Gaussian PDF $\mathcal{N}(\widehat{\mu},\widehat{\Sigma})$ does not imply that it can replace $\rho(x)$. Since $\rho(x)$ expressed in an MoG form have higher moments other than first and second, the distance between $\rho(x)$ and $\delta(x)$ may differ from that between $\mathcal{N}(\widehat{\mu},\widehat{\Sigma})$ and $\delta(x)$. However, most importantly, we address that ${\mathcal{W}}(\rho(x), \delta(x))$ and ${\mathcal{W}}(\mathcal{N}(\widehat{\mu},\widehat{\Sigma}),\delta(x))$ are equidistant at any time $k$ by the following theorem.
}

%\begin{proposition}(\textbf{A generic lower bound for $W$} (p. 11, \cite{rachev1998mass}))
%Consider two arbitrary PDFs $\rho_{1}$ and $\rho_{2}$, with respective mean-covariance pairs $\left(\mu_{1},\Sigma_{1}\right)$ and $\left(\mu_{2},\Sigma_{2}\right)$. Then, we have
%\begin{equation}
%W\left(\mathcal{N}\left(\mu_{1},\Sigma_{1}\right), \mathcal{N}\left(\mu_{2},\Sigma_{2}\right)\right) \leq W\left(\rho_{1}, \rho_{2}\right).
%\end{equation}
%\label{WassPropositionGaussNonGaussInequality}
%\end{proposition}

{
\begin{theorem} (\textbf{Equidistance between $W$ and $\widehat{W}$})\label{theorem_What=W}
At any given time $k$, let the state PDF for an $m$-mode SJLS $\rho(x(k))$, be of the form (\ref{dtSJLSstatePDF}), which we rewrite as $\rho\left(x(k)\right) = \displaystyle\sum_{j_{k}=1}^{m}\displaystyle\sum_{j_{0}=1}^{m_{0}} \alpha_{j_{0}}\beta_{j_{k}} \mathcal{N}\left(\mu_{j_k},\Sigma_{j_k}\right)$, where $\beta_{j_{k}}\triangleq\displaystyle\sum_{j_{k-1}=1}^{m}\hdots\displaystyle\sum_{j_{1}=1}^{m} \left(\prod_{r=1}^{k} \pi_{j_r}(r)\right)$, $\mu_{j_k} = A_{j_{k}}^{*} \mu_{j_{0}}$, $\Sigma_{j_k}=A_{j_{k}}^{*} \Sigma_{j_{0}} A_{j_{k}}^{*^{\top}}$, and $\displaystyle A_{j_k}^*\triangleq \prod_{r=k}^{1}A_{j_r}$. Let the instantaneous mean and covariance of the mixture PDF $\rho(x(k))$ be $\widehat{\mu}(k)$ and $\widehat{\Sigma}(k)$, respectively.  Then, we have
\begin{align}
&\widehat{W}(k) = W(k) = \left(\displaystyle\sum_{j_{k}=1}^{m}\displaystyle\sum_{j_{0}=1}^{m_{0}} \alpha_{j_{0}} \beta_{j_{k}} W_{j_{k}}^{2}(k)\right)^{1/2}, \forall k\in\mathbb{N}_0
\label{WrelationSJLS}\\
&\text{where }\nonumber\\
&\qquad \widehat{W}(k) \triangleq {\mathcal{W}}\left(\mathcal{N}\left(\widehat{\mu}(k),\widehat{\Sigma}(k)\right), \delta(x)\right),\nonumber\\ 
&\qquad W(k) \triangleq {\mathcal{W}}\left(\rho\left(x(k)\right), \delta(x)\right),\nonumber\\
&\qquad W_{j_{k}}(k) \triangleq {\mathcal{W}}\left(\mathcal{N}\left(\mu_{j_k},\Sigma_{j_k}\right), \delta(x)\right),\nonumber\\
&\qquad \mu_{j_k} = A_{j_{k}}^{*} \mu_{j_{0}},\: \Sigma_{j_k} = A_{j_{k}}^{*}\Sigma_{j_0}A_{j_{k}}^{*^{\top}},\quad \forall k\geq 1.\nonumber
\end{align}
\label{WassThmSJLSwithDirac}
\end{theorem}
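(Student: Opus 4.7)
The plan is to establish the two equalities separately. The second equality, $W(k) = \left(\sum_{j_k,j_0} \alpha_{j_0}\beta_{j_k} W_{j_k}^2(k)\right)^{1/2}$, is essentially immediate: since by hypothesis $\rho(x(k))$ is written as the mixture $\sum_{j_k}\sum_{j_0}(\alpha_{j_0}\beta_{j_k})\mathcal{N}(\mu_{j_k},\Sigma_{j_k})$ with nonnegative weights summing to one (this follows from $\sum_{j_0}\alpha_{j_0}=1$ and $\sum_{j_k=1}^{m}\beta_{j_k}=\prod_{r=1}^{k}\sum_{j_r}\pi_{j_r}(r)=1$), Theorem \ref{WassThmSJSwithDirac} applies directly with the joint double index $(j_0,j_k)$ playing the role of the single mixture index $j$.

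The substantive content is the first equality $\widehat{W}(k)=W(k)$. Here I would reduce both sides to closed-form expressions in terms of the component means and covariances via Corollary \ref{GaussToDiracCorollary}. On the one hand, since $\mathcal{N}(\widehat{\mu}(k),\widehat{\Sigma}(k))$ is Gaussian, \eqref{GaussianDiracW} gives $\widehat{W}^{2}(k)=\|\widehat{\mu}(k)\|_{\ell_2}^{2}+\tr(\widehat{\Sigma}(k))$. On the other, each $W_{j_k}^2(k)=\|\mu_{j_k}\|_{\ell_2}^2+\tr(\Sigma_{j_k})$, so the right-hand side of \eqref{WrelationSJLS} equals $\sum_{j_k,j_0}\alpha_{j_0}\beta_{j_k}\big(\|\mu_{j_k}\|_{\ell_2}^2+\tr(\Sigma_{j_k})\big)$.

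Next I would plug Lemma \ref{MeanCovMixPDF} (applied to the mixture with weights $\alpha_{j_0}\beta_{j_k}$) into $\tr(\widehat{\Sigma}(k))$, obtaining
\[
\tr(\widehat{\Sigma}(k))=\sum_{j_k,j_0}\alpha_{j_0}\beta_{j_k}\Big(\tr(\Sigma_{j_k})+\|\mu_{j_k}-\widehat{\mu}(k)\|_{\ell_2}^{2}\Big).
\]
Expanding $\|\mu_{j_k}-\widehat{\mu}\|_{\ell_2}^{2}=\|\mu_{j_k}\|_{\ell_2}^{2}-2\mu_{j_k}^{\top}\widehat{\mu}+\|\widehat{\mu}\|_{\ell_2}^{2}$ and summing against the weights, the middle term produces $-2\|\widehat{\mu}\|_{\ell_2}^{2}$ (using $\widehat{\mu}=\sum\alpha_{j_0}\beta_{j_k}\mu_{j_k}$ from Lemma \ref{MeanCovMixPDF}) while the last term produces $+\|\widehat{\mu}\|_{\ell_2}^{2}$. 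Adding the initial $\|\widehat{\mu}\|_{\ell_2}^{2}$ from the Gaussian-to-Dirac formula exactly cancels these contributions, leaving $\widehat{W}^{2}(k)=\sum_{j_k,j_0}\alpha_{j_0}\beta_{j_k}\big(\|\mu_{j_k}\|_{\ell_2}^{2}+\tr(\Sigma_{j_k})\big)$, which matches the right-hand side of \eqref{WrelationSJLS} and establishes $\widehat{W}(k)=W(k)$.

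I do not anticipate a genuine obstacle: the argument is an algebraic identity that pivots entirely on the exact cancellation of the $\|\widehat{\mu}\|_{\ell_2}^{2}$ terms. The only care needed is bookkeeping, namely making sure the double summation $\sum_{j_k}\sum_{j_0}\alpha_{j_0}\beta_{j_k}$ is treated as the mixture weight throughout, and verifying once that $\sum_{j_k,j_0}\alpha_{j_0}\beta_{j_k}=1$ so that Lemma \ref{MeanCovMixPDF} is legitimately applicable. Finally, the conclusion holds for all $k\in\mathbb{N}_0$ uniformly because the derivation nowhere uses any structural property of $\{A_j\}$ beyond the MoG representation asserted in the hypothesis (which itself comes from Theorem \ref{UncPropThmdtSJLS}).
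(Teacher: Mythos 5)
Your proof is correct. The algebraic core --- expanding $\tr\big(\widehat{\Sigma}(k)\big)$ via Lemma~\ref{MeanCovMixPDF} and cancelling the $\parallel\widehat{\mu}(k)\parallel_{\ell_2}^2$ terms --- is exactly the identity the paper uses, but your overall route is structurally different. The paper argues by recursion on $k$: it performs this cancellation at $k=0$, then generates $\widehat{\mu}(k),\widehat{\Sigma}(k)$ through the Split-and-Merge recursion (propagate the synthetic Gaussian through each mode, re-merge), and repeats the cancellation at every step, using $\widehat{\mu}\widehat{\mu}^{\top}+\widehat{\Sigma}=\sum\alpha\left(\mu\mu^{\top}+\Sigma\right)$ to unwind back to the original components. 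You instead apply Lemma~\ref{MeanCovMixPDF} once to the full mixture at time $k$, with the path weights $\alpha_{j_0}\beta_{j_k}$ as mixture weights, and cancel a single time; this is shorter and proves the theorem exactly as stated, since the statement defines $\widehat{\mu}(k),\widehat{\Sigma}(k)$ as the moments of the true mixture $\rho(x(k))$. What the paper's recursive version buys is the extra fact --- essential for the algorithm, though not literally part of the statement --- that the synthetic Gaussian obtained by iterating Split-and-Merge has precisely these moments, so that $\widehat{W}(k)$ is computable without ever forming the exponentially large mixture; to recover that from your argument you would add the one-line remark that taking first and second moments of a mixture commutes with the linear mode maps, so the recursion reproduces $(\widehat{\mu}(k),\widehat{\Sigma}(k))$. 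Both proofs ultimately rest on the same observation, which your version makes more transparent: $\mathcal{W}^2(\cdot,\delta(x))$ depends only on the second moment, so any two distributions sharing it are equidistant from $\delta(x)$.
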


\begin{proof}
The rightmost equality in (\ref{WrelationSJLS}), follows directly from Theorem \ref{WassThmSJSwithDirac}. Thus, it suffices to prove that $\widehat{W}(k)=\left(\sum_{j_{k}=1}^{m}\sum_{j_0=1}^{m_0}\alpha_{j_0}\beta_{j_{k}} W_{j_{k}}^{2}(k)\right)^{1/2}$.

At time $k=0$, the mean and covariance pair $(\widehat{\mu}_0, \widehat{\Sigma}_0)$ of an initial MoG can be computed by $(\widehat{\mu}_0, \widehat{\Sigma}_0) = \big(\sum_{j_0=1}^{m_0}\alpha_{j_0}\mu_{j_0}, \sum_{j_0=1}^{m_0}(\Sigma_{j_0}+(\mu_{j_0}-\widehat{\mu}_0)(\mu_{j_0}-\widehat{\mu}_0)^{\top})\big)$ from Lemma \ref{lemma:1}. If we construct a synthetic Gaussian $\mathcal{N}(\widehat{\mu}_0, \widehat{\Sigma}_0)$, Wasserstein distance $\widehat{W}$ at time $k=0$ can be computed by \eqref{GaussianDiracW} as follows.
\begin{align}
&\widehat{W}^2(0) \overset{\eqref{GaussianDiracW}}{=} \parallel \widehat{\mu}_0\parallel_{\ell_{2}\left(\mathbb{R}^{n}\right)}^{2} + \text{tr}(\widehat{\Sigma}_0) \overset{(\ref{MeanCovHat})}{=} \widehat{\mu}_0^{\top}\widehat{\mu}_0 + \nonumber \\ 
&\text{tr}\left(\displaystyle\sum_{j_{0}=1}^{m_0} \alpha_{j_{0}}\Big(\Sigma_{j_{0}} + (\mu_{j_{0}}-\widehat{\mu}_0)(\mu_{j_{0}}-\widehat{\mu}_0)^{\top}\Big)\right).\label{eqn:18}
\end{align}
Since $\text{tr}(\cdot)$ is a linear operator, we can expand \eqref{eqn:18} as
\begin{align}
&\widehat{W}^{2}(0) = \widehat{\mu}_0^{\top}\widehat{\mu}_0
+ \: \displaystyle\sum_{j_{0}=1}^{m_0}\alpha_{j_0}\text{tr}\left(\Sigma_{j_{0}}\right) + \: \text{tr}\left(\displaystyle\sum_{j_{0}=1}^{m_0} \alpha_{j_0}\mu_{j_{0}}\mu_{j_{0}}^{\top}\right) \nonumber \\ 
& - \text{tr}\left(\left(\displaystyle\sum_{j_{0}=1}^{m_0} \alpha_{j_0}\mu_{j_{0}}\right)\widehat{\mu}_0^{\top}\right) - \text{tr}\left(\widehat{\mu}_0\left(\displaystyle\sum_{j_{0}=1}^{m_0} \alpha_{j_0}\mu_{j_{0}}\right)^{\top}\right) \nonumber \\ 
& + \text{tr}\left(\widehat{\mu}_0\widehat{\mu}_0^{\top}\right).\label{eqn:19}
\end{align}
Recalling that $\widehat{\mu}_0 = \sum_{j_0=1}^{m_0}\alpha_{j_0}\mu_{j_0}$ and $\widehat{\mu}_0^{\top}\widehat{\mu}_0 = \:\text{tr}\left(\widehat{\mu}_0^{\top}\widehat{\mu}_0\right) = \:\text{tr}\left(\widehat{\mu}_0\widehat{\mu}_0^{\top}\right)$, the first, fourth, fifth and sixth term in the right-hand-side of \eqref{eqn:19} cancel out, resulting in
\begin{eqnarray}
\qquad\widehat{W}^{2}(0) &=& \displaystyle\sum_{j_{0}=1}^{m_0} \alpha_{j_{0}} \: \text{tr}\left(\mu_{j_{0}}\mu_{j_{0}}^{\top}\right) + \: \displaystyle\sum_{j_{0}=1}^{m_0}\alpha_{j_{0}}\text{tr}\left(\Sigma_{j_{0}}\right)\nonumber\\
\qquad&=& \displaystyle\sum_{j_{0}=1}^{m_0}\alpha_{j_{0}} \left(\parallel \mu_{j_{0}} \parallel_{\ell_{2}\left(\mathbb{R}^{n}\right)}^{2} + \: \text{tr}\left(\Sigma_{j_{0}}\right)\right) \nonumber\\
&=& \displaystyle\sum_{j_{0}=1}^{m_0}\alpha_{j_{0}} {\mathcal{W}}^2\Big(\mathcal{N}(\mu_{j_0},\Sigma_{j_0}), \delta(x)\Big)\nonumber\\
&=& \displaystyle\sum_{j_{0}=1}^{m_0}\alpha_{j_{0}} W_{j_0}^{2}(0) \overset{\eqref{WrelationSJS}}{=} W^2(0).\label{eqn:20}
\end{eqnarray}
Hence, $\widehat{W}(0)$ is equidistant with $W(0)$.
%\begin{figure}
%\begin{center}
%\includegraphics[width=0.475\textwidth]{./Figures/CombiningTex_scissored.pdf}
%\caption{Illustration of Theorem \ref{WassThmSJLSwithDirac}, showing that given MoG PDF $\rho$, we can construct Gaussian $\widehat{\rho}$ such that $W = \widehat{W}$, where $W\triangleq W\left(\rho,\delta\right)$, and $\widehat{W}\triangleq W\left(\widehat{\rho},\delta\right)$.}
%\label{IllustrateMOGtheorem}
%\end{center}
%\end{figure}

At time $k=1$, we propagate the modal PDFs from a synthetic Gaussian $\mathcal{N}(\widehat{\mu}_{0},\widehat{\Sigma}_{0})$, which results in $m$ modal Gaussians $\mathcal{N}(A_{j_1}\widehat{\mu}_{0}, A_{j_1}\widehat{\Sigma}_{0}A_{j_1}^{\top})$, $j_1 = 1,2,\hdots,m$ during ``Split step", followed by ``Merge step" to obtain a new synthetic Gaussian $\mathcal{N}(\widehat{\mu}_{1},\widehat{\Sigma}_{1})$, where $\widehat{\mu}_{1}=\sum_{j_1=1}^{m}\pi_{j_1}(1)A_{j_1}\widehat{\mu}_{0}$ and $\widehat{\Sigma}_{1}=\sum_{j_1=1}^{m}\pi_{j_1}(1)\bigg(A_{j_1}\widehat{\Sigma}_{0}A_{j_1}^{\top} + (A_{j_1}\widehat{\mu}_{0}-\widehat{\mu}_{1})(A_{j_1}\widehat{\mu}_{0}-\widehat{\mu}_{1})^{\top}\bigg)$ from Lemma \ref{MeanCovMixPDF}.
Then, $\widehat{W}(1)$ can be computed by
\begin{align}
&\widehat{W}^{2}(1) \overset{\eqref{GaussianDiracW}}{=}\parallel \widehat{\mu}_{1}\parallel_{\ell_{2}\left(\mathbb{R}^{n}\right)}^{2} + \text{tr}\big(\widehat{\Sigma}_{1}\big) 
= \widehat{\mu}_{1}^{\top}\widehat{\mu}_{1} + \tr\Bigg(\sum_{j_1=1}^{m}\nonumber\\
&\pi_{j_1}(1)\Big( A_{j_1}\widehat{\Sigma}_{0}A_{j_1}^{\top} + \big(A_{j_1}\widehat{\mu}_{0}-\widehat{\mu}_{1})(A_{j_1}\widehat{\mu}_{0}-\widehat{\mu}_{1}\big)^{\top}\Big)\Bigg).\nonumber\\
\end{align}
By exactly the same procedure in \eqref{eqn:19}, and the term cancellation, we arrive at
\begin{align}
&\widehat{W}^{2}(1) = \sum_{j_1=1}^{m}\pi_{j_1}(1)\bigg(\tr\Big(A_{j_1}\widehat{\mu}_{0}\widehat{\mu}_{0}^{\top}A_{j_1}^{\top} + A_{j_1}\widehat{\Sigma}_{0}A_{j_1}^{\top}\Big)\bigg)\nonumber\\
&\overset{\eqref{MeanCovHat}}{=} \sum_{j_1=1}^{m}\pi_{j_1}(1)\Bigg(\tr\bigg(A_{j_1}\Big(\sum_{j_0=1}^{m_0}\alpha_{j_0}\big(\mu_{j_0}\mu_{j_0}^{\top}+\Sigma_{j_0}\big)\Big)A_{j_1}^{\top} \bigg)\Bigg)\nonumber\\
&= \sum_{j_1=1}^{m}\sum_{j_0=1}^{m_0}\pi_{j_1}(1)\alpha_{j_0}\bigg(\parallel \mu_{j_1} \parallel_{\ell_{2}\left(\mathbb{R}^{n}\right)}^{2} + \text{tr}\big(\Sigma_{j_1}\big)\bigg)\nonumber\\
&= \sum_{j_1=1}^{m}\sum_{j_0=1}^{m_0}\pi_{j_1}(1)\alpha_{j_0}W^2_{j_1}(1) \overset{\eqref{WrelationSJS}}{=} W^2(1)
\end{align}
where $\mu_{j_1}=A_{j_1}\mu_{j_0}$ and $\Sigma_{j_1}=A_{j_1}\Sigma_{j_0}A_{j_1}^{\top}$.

Continuing in this manner, finally we obtain a following result for any time $k$.
\begin{eqnarray}
\widehat{W}^{2}(k) &=& \sum_{j_k=1}^{m}\cdots\sum_{j_1=1}^{m}\sum_{j_0=1}^{m_0}\Bigg(\prod_{r=1}^{k}\pi_{j_r}(r)\Bigg)\alpha_{j_0}\nonumber\\
&& \quad\qquad\qquad\qquad \bigg(\parallel \mu_{j_k} \parallel_{\ell_{2}\left(\mathbb{R}^{n}\right)}^{2} + \text{tr}\big(\Sigma_{j_k}\big)\bigg)\nonumber\\
&=& \sum_{j_k=1}^{m}\cdots\sum_{j_1=1}^{m}\sum_{j_0=1}^{m_0}\Bigg(\prod_{r=1}^{k}\pi_{j_r}(r)\Bigg)\alpha_{j_0}W^2_{j_k}(k)\nonumber \\
&\overset{\eqref{WrelationSJS}}{=}& W^2(k)
\end{eqnarray}
where $\mu_{j_k} = A_{j_k}A_{j_{k-1}}\cdots A_{j_1}\mu_{j_0}=A_{j_k}^{*}\mu_{j_0}$,\\ $\Sigma_{j_k} = $ $\big(A_{j_k}A_{j_{k-1}}\cdots A_{j_1}\big)\Sigma_{j_0}\big(A_{j_k}A_{j_{k-1}}\cdots A_{j_1}\big)^{\top}=A_{j_k}^{*}\Sigma_{j_0}A_{j_k}^{*\top}$.
\end{proof}

According to Theorem \ref{theorem_What=W}, it is unnecessary to propagate the state PDF $\rho(x)$ and to compute $W$, which is intractable due to the exponential growth of Gaussian components. Instead, we can analyse the performance of the SJLS through $\widehat{W}$, since $\widehat{W}$ is equidistant with $W$ at all time $k$. 
%The practical utilization of the ``Split-and-Merge" algorithm with $\widehat{W}$ computation can be stated as follows. From a synthetic Gaussian $\mathcal{N}(\widehat{\mu}(k), \widehat{\Sigma}(k))$ obtained by ``Merge step" at time $k$, we propagate modal PDFs by mode dynamics $\{A_j\}_{j=1}^{m}$ during ``Split step". At time $k+1$, we apply the ``Merge step" and obtain a synthetic Gaussian $\mathcal{N}(\widehat{\mu}(k+1), \widehat{\Sigma}(k+1))$, followed by $\widehat{W}(k+1)$ computation with \eqref{GaussianDiracW}. 
The major advantages of the ``Split-and-Merge" algorithm with $\widehat{W}$ computation for the performance and the robustness analysis can be summarized in the following sense. $\widehat{W}$ computation using \eqref{GaussianDiracW} provides an analytical solution, which is computationally concise and efficient enough. In addition, at any time step, we only have $m$ mean vectors and covariance matrices to work with, and hence the scalability problem with an exponential growth can be avoided.
}

{
\begin{remark}\textbf{(Applicability of performance and robustness measure to general SJLSs)}
Since the switching probability $\pi(k)$ is an independent variable with regard to $\widehat{W}(k)$ as described in Theorem \ref{theorem_What=W}, we can compute $\widehat{W}(k)$ for any SJLSs regardless of the updating rule for $\pi(k)$. Once $\pi(k)$ is computed at time $k$ by governing recursion equation (i.e., i.i.d., Markov, or semi-Markov jump process, etc.), the performance and the robustness for SJLSs are measured by $\widehat{W}(k)$. As a consequence, the proposed method for the performance and robustness measure can be applied to any SJLSs.
\end{remark}
}
%-------------------------------------------------------------------------------------------------------------

\section{Numerical Example}
%The proposed methods for the robustness analysis are applicable to any stochastic jump linear systems, not necessarily Markovian jumps. However, since asynchronous behavior such as communication delays or packet losses are being widely represented by Markovian process, we specify a system with random communication delays, which has also Markovian process.

Consider the inverted pendulum on cart in Fig. \ref{fig:inv_pendulum} with parameters described in Table \ref{table_inverted_pendulum}. Originally, this example was introduced in \cite{xiao2000control} with single communication delay term $\tau_k$ between sensor and controller. 

\begin{table}[h]
\begin{center}
\caption{Nomenclature for Inverted Pendulum Dynamics.}\label{table_inverted_pendulum}
\label{table_quadrotor_nomenclature}
  \begin{tabular}{|c|c|c|c|}\hline
  Symbol & definition & Symbol & definition\\\hline
	$m_1$ & cart mass & $m_2$ & pendulum mass\\
	$L$ & pendulum length & $x$ & cart position\\
	$\theta$ & pendulum angle & $u$ & input force
  \\\hline
  \end{tabular}
\end{center}
\end{table}

The system states are $x_1 = x$, $x_2 = \dot{x}$, $x_3 = \theta$, and $x_4 = \dot{\theta}$. We assume that $m_1 = 1$kg, $m_2 = 0.5$kg, $L=1$m with friction-free floor.
Later, this example was further exploited by \cite{zhang2005new} with two random delays $\tau_k$ and $d_k$ which are sensor-to-controller and controller-to-actuator delays, respectively. The sets of mode are $\mathcal{M}(\tau_k) = \{0,1,2\}$ and $\mathcal{M}(d_k)=\{0,1\}$.
When the control action is taken at time $k$, the controller-to-actuator delay $d_k$ is unknown, but $\tau_k$ and $d_{k-1}$ are found. Accordingly, controller gain $F$ is dependent on $\tau_k$ and $d_{k-1}$. Hence, the linearized closed-loop system model with sampling time $T_s = 0.1$ is denoted by
\begin{align*}
x(k+1) = Ax(k)+BF(\tau_k,d_{k-1})x(k-\tau_k-d_k)
\end{align*}
where
\begin{align*}
A = \begin{bmatrix}
1 & 0.1 & -0.0166 & -0.0005\\
0 & 1 & -0.3374 & -0.0166\\
0 & 0 & 1.0996 & 0.1033\\
0 & 0 & 2.0247 & 1.0996
\end{bmatrix},\quad
B = \begin{bmatrix}
0.0045\\0.0896\\-0.0068\\-0.1377
\end{bmatrix}
\end{align*}
\begin{figure}
\begin{center}
\includegraphics[scale=0.4]{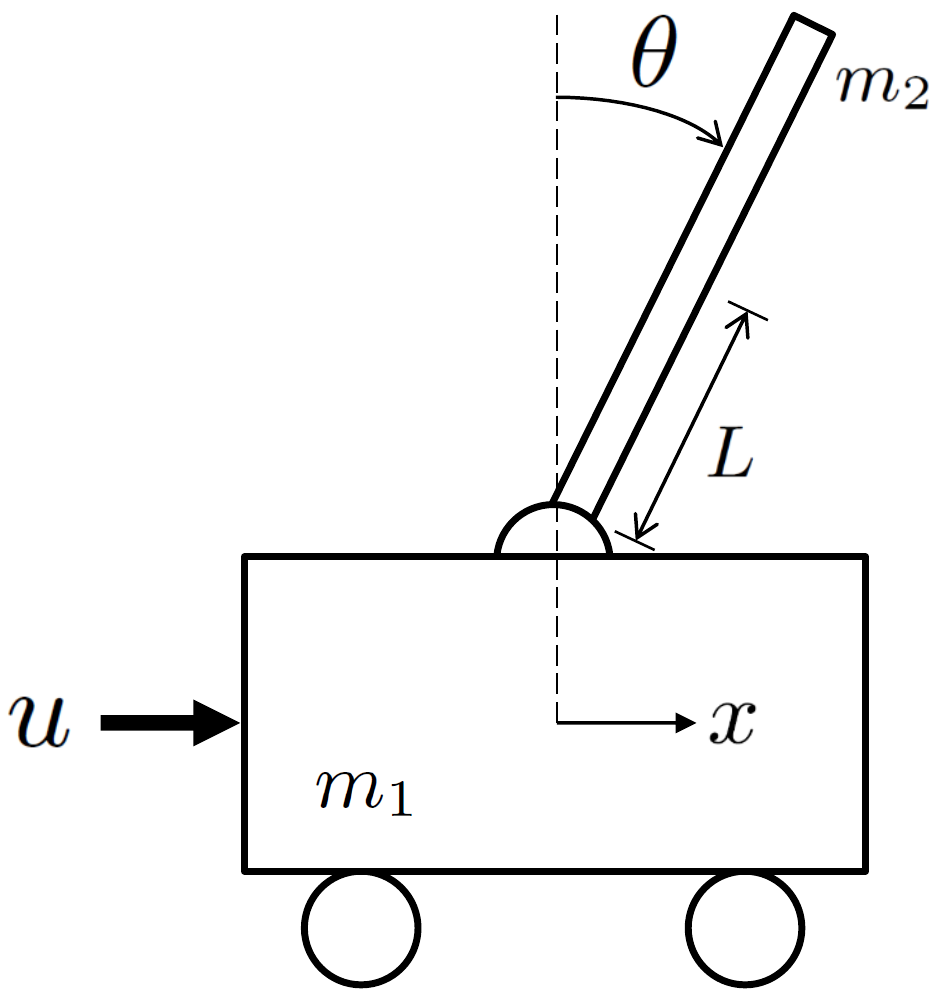}
\caption{Inverted Pendulum on Cart.}\label{fig:inv_pendulum}
\end{center}
\end{figure} 

with the controller gain $F$'s given in \cite{zhang2005new}:
\begin{align*}
&F(0,0) = \begin{bmatrix}
0.1690 & 0.8824 & 19.5824 & 4.3966
\end{bmatrix}\\
&F(0,1) = \begin{bmatrix}
0.5625 & 0.6259 & 24.8814 & 5.1886
\end{bmatrix}\\
&F(1,0) = \begin{bmatrix}
-0.3076 & 0.9370 & 12.0069 & 5.9910
\end{bmatrix}\\
&F(1,1) = \begin{bmatrix}
-0.0097 & 0.7109 & 15.2518 & 7.3154
\end{bmatrix}\\
&F(2,0) = \begin{bmatrix}
-0.3212 & 1.0528 & 11.9330 & 6.3809
\end{bmatrix}\\
&F(2,1) = \begin{bmatrix}
0.0427 & 0.8640 & 16.0874 & 7.8361
\end{bmatrix}.\\
\end{align*}
Therefore, this system has total $6$ numbers of closed-loop dynamics $A_{\sigma_k}$ with $\sigma_k\in\{1,2,\hdots,6\}$.

{
1) Markovian Communication Delays:\\
We denote the transition probability of sensor-to-controller and controller-to-actuator delays as $\lambda_{ij}$ and $\omega_{rs}$, respectively. Then, $\lambda_{ij}$ and $\omega_{rs}$ are defined by
}
\begin{align*}
\lambda_{ij} = \mathbb{P}(\tau_{k+1}=j|\tau_{k}=i),\:\omega_{rs} = \mathbb{P}(\omega_{k+1}=s|\omega_{k}=r)
\end{align*}
where $\lambda_{ij},\omega_{rs}\geq 0$ and $\sum_{j=0}^{2}\lambda_{ij}=1$, $\sum_{s=0}^{1}\omega_{rs}=1$. 
{Given individual Markov transition probability matrices}
\begin{align*}
\Lambda = \begin{bmatrix}
0.5 & 0.5 & 0\\
0.3 & 0.6 & 0.1\\
0.3 & 0.6 & 0.1
\end{bmatrix}, \quad
\Omega = \begin{bmatrix}
0.2 & 0.8\\
0.5 & 0.5
\end{bmatrix}
\end{align*}
{
corresponding to $\lambda_{ij}$ and $\omega_{rs}$, the Markov transition probability matrix $P$ for 6 modes MJLS is obtained from $P=\Lambda\otimes \Omega$ as in \cite{xiao2000control}. The switching probability distribution $\pi(k)$ is updated by the linear recursion equation $\pi(k+1) = \pi(k)P$ with initial probability distribution $\pi(0) = [1,\:0,\:0,\:0,\:0,\:0]$.
}

{
2) i.i.d. Communication Delays:\\
Although the previous examples in \cite{xiao2000control,zhang2005new} assumed that the communication delays are governed by Markov process, we adopt an i.i.d. jump process to manifestly show that the proposed methods are also applicable to other types of SJLSs.
In case of i.i.d. jump process, the switching probability distribution $\pi(k)$ is stationary, and hence it does not change over time. We assume that the switching probabilities $\pi_{sc}$ and $\pi_{ca}$ are given by
\begin{align*}
\pi_{sc} = [0.7,\: 0.2,\: 0.1],\quad \pi_{ca} = [0.5,\: 0.5]
\end{align*}
where $\pi_{sc}$ and $\pi_{ca}$ stand for the switching probability distribution with respect to sensor-to-controller and controller-to-actuator, respectively. Then, the switching probability $\pi$ for this inverted pendulum system is given by $\pi = \pi_{sc}\otimes \pi_{ca}.$}

Differently from \cite{zhang2005new} where the initial state is deterministically given, we assume that the system contains initial state uncertainties as Gaussian distribution $\mathcal{N}(\mu(0), \Sigma(0))$ with
$\mu(0) = \begin{bmatrix}
0, & 0, & 0.1, & 0
\end{bmatrix}^{\top}$ and $\Sigma(0) = 0.25^2 I_{4\times 4}$, where $I_{4\times 4}$ denotes $4\times 4$ identity matrix.
Moreover, we tested the performance and robustness of this inverted pendulum system with an initial MoG PDF, which is given by
a bimodal Gaussian in the following form
\begin{align*}
\rho(0) = \sum_{j=1}^{2}\alpha_j(0)\mathcal{N}(\mu_j(0), \Sigma_j(0))
\end{align*}
where $\alpha_1(0) = 0.5$ and $\alpha_2(0) = 0.5$. Mean and covariance for each Gaussian component are given by
\begin{align*}
&\mu_1(0) = \begin{bmatrix}
0.5, & 0.25, & -0.12, & 0.05
\end{bmatrix}^{\top},\: \Sigma_1(0) = 0.25^2 I_{4\times 4},\\
&\mu_2(0) = \begin{bmatrix}
-0.4, & 0.35, & 0.07, & -0.1
\end{bmatrix}^{\top},\: \Sigma_2(0) = 0.3^2 I_{4\times 4}.
\end{align*}
These types of multimodal uncertainties are caused by various factors such as sensing under interference\cite{girod2001robust}, distributed sensor networks\cite{langendoen2003distributed}, multitaget tracking problems\cite{liu2007multitarget} and so forth. The bivariate marginal distribution associated with state $x$ and $\theta$ for these Gaussian and MoG PDF are shown in Fig. \ref{fig:4(a)} and Fig. \ref{fig:4(b)}, respectively.

\begin{figure}
\begin{center}
\subfigure[Gaussian marginal distr.]{\includegraphics[scale=0.2]{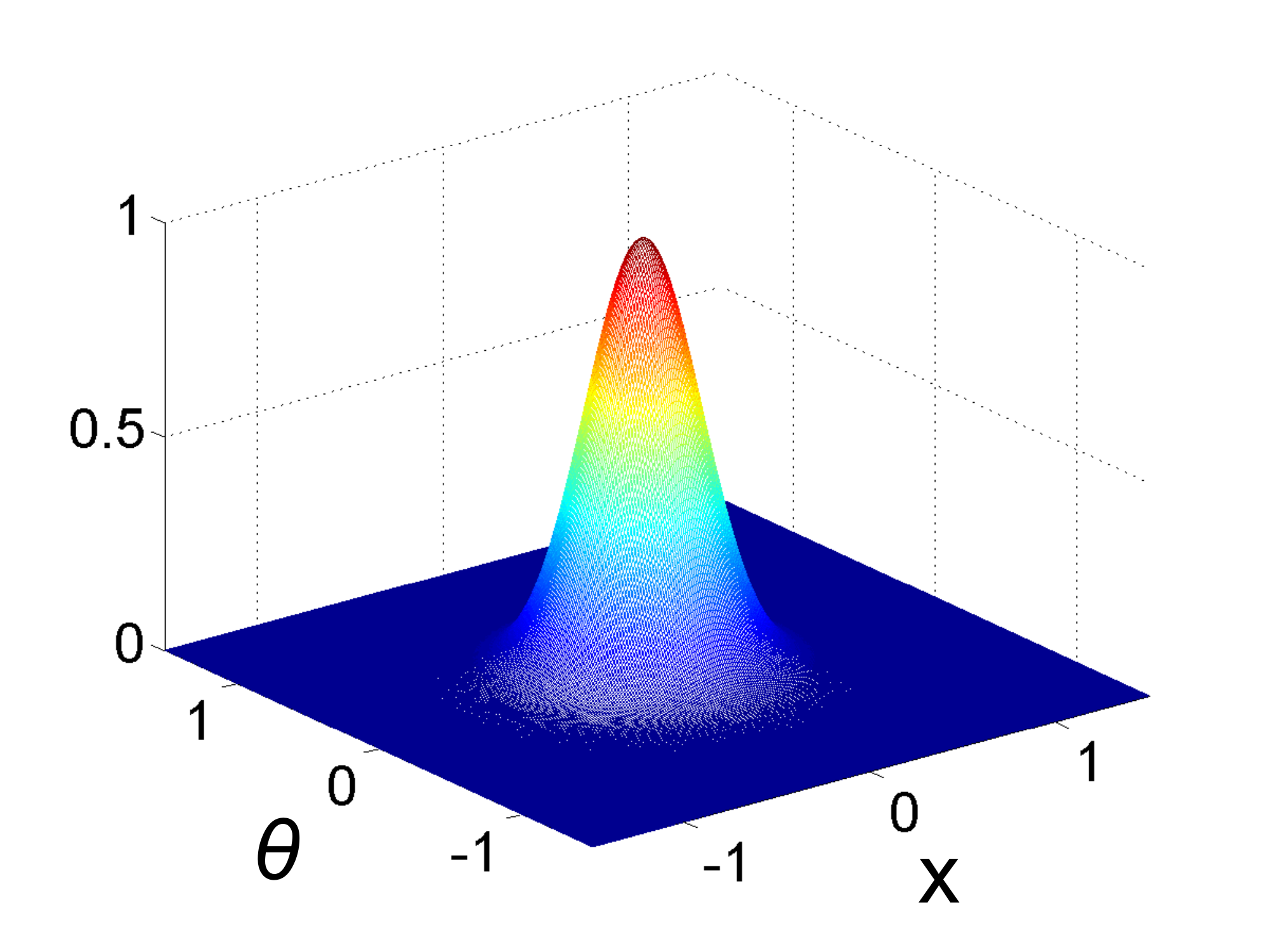}\label{fig:4(a)}}
\subfigure[MoG marginal distr.]{\includegraphics[scale=0.2]{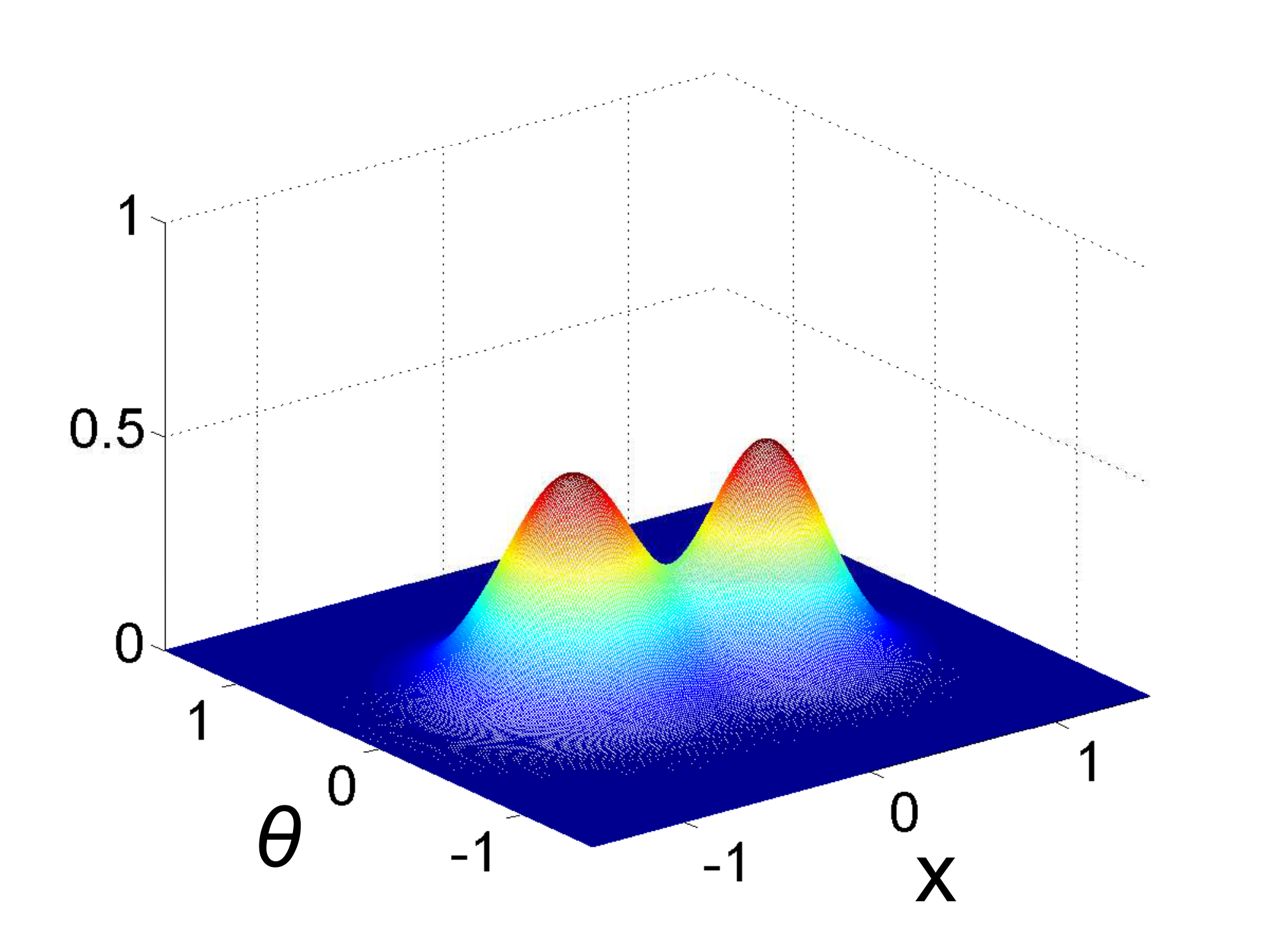}\label{fig:4(b)}}
\subfigure[Wasserstein distance with different stochastic jump processes and initial PDFs; MJLS with Gaussian (blue solid), MJLS with MoG (red dashed), i.i.d. with Gaussian (green triangle), and i.i.d. with MoG (purple cross).]{\includegraphics[scale=0.45]{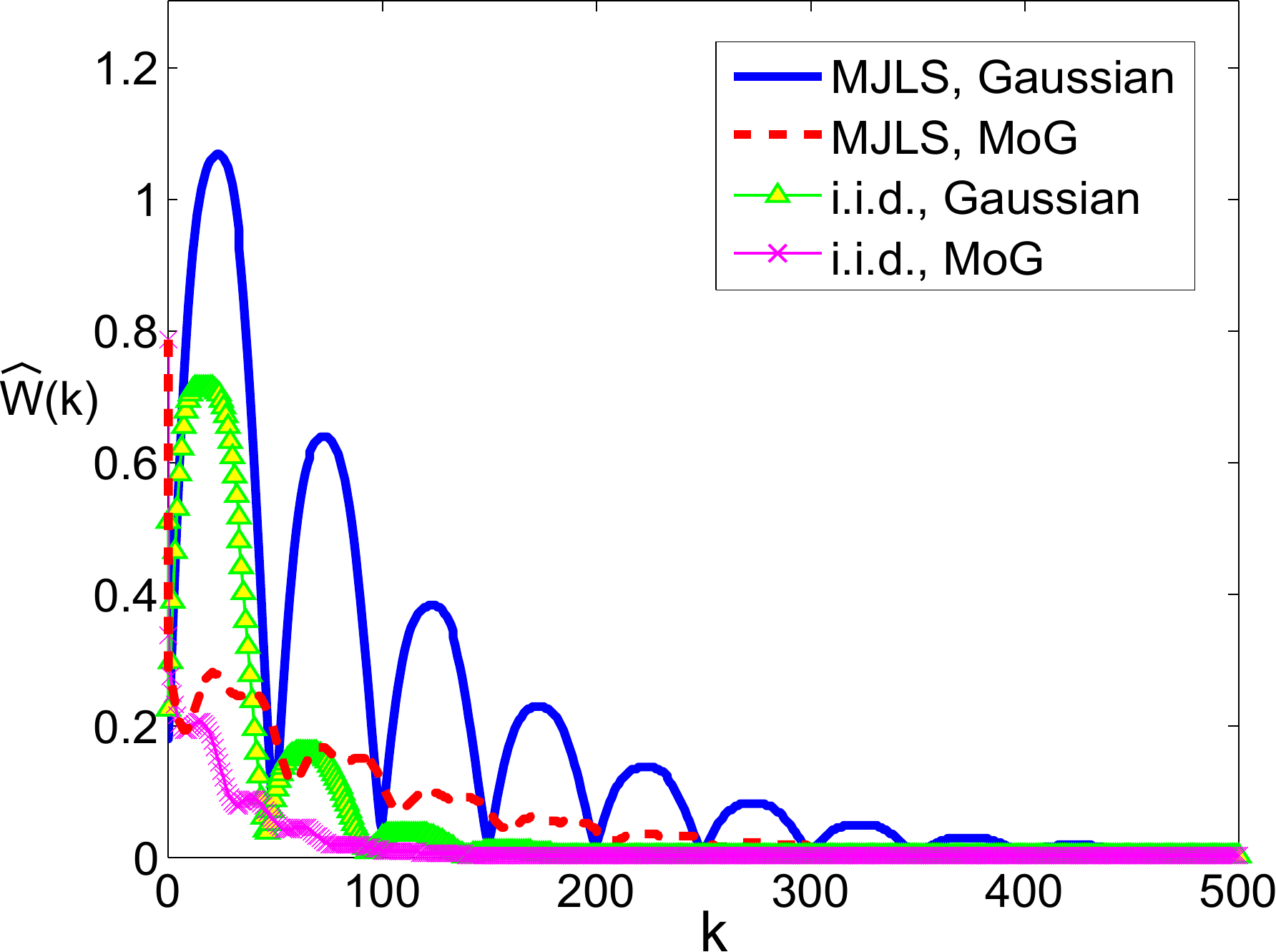}\label{fig:4(c)}}
\caption{Simulation Result for Performance and Robustness Analysis of Inverted Pendulum system with the existence of both random communication delays and initial state uncertainties.}\label{fig_pendulum}
\end{center}
\end{figure}

{
In Fig. \ref{fig:4(c)}, the performance and the robustness of this inverted pendulum system with different stochastic jump processes and initial state uncertainties are depicted via $\widehat{W}$ computation. For all cases, we know that the system is m.s. stable from the convergence of $\widehat{W}$. However, the rate of convergence and the performance show different aspects in the transient time. Among all cases, $\widehat{W}$ for i.i.d. jump process with initial MoG PDF converges fast with small bounce, whereas $\widehat{W}$ for MJLS with initial Gaussian PDF slowly converges with large bounce.
}

At every time step, the ``Split-and-Merge'' algorithm, presented in Section \ref{sec:3.2.2} is used to propagate the state PDFs. Without using these techniques, it is practically impossible to propagate density functions and calculate $W$ {(i.e., the Wasserstein distance between actual state PDF $\rho(x)$ and $\delta(x)$)} even for a finite {switching modes}. The number of Gaussian components that represents the state PDF after $N$ time steps is $6^{N}$, which soon becomes computationally intractable. For an $m$-mode {SJLS}, the growth rate is $m^N$.
{
With the implementation of the proposed ``Split-and-Merge'' algorithm, $\widehat{W}$ that is equidistant with $W$ was computed {effectively and efficiently}. From this example, it is clearly shown that the performance and the robustness for general SJLSs can be measured via $\widehat{W}$ distance which quantifies the uncertainties. }
%The stability of this inverted pendulum system under given initial state uncertainties is also guaranteed by the convergence of $W$.

\section{Conclusion}
{
In this paper, we proposed new tools for the performance and the robustness analysis of stochastic jump linear systems. With given initial state uncertainties, Wasserstein distance that compares shapes of PDFs provides a way to quantify the uncertainties. Since the growth of PDF components in stochastic jumps is exponential in time, we presented a new ``Split-and-Merge'' algorithm for uncertainty propagation that scales linearly with the number of modes in the jump system. This method provides analytical solutions, while avoiding exponential growth of PDF components. The proposed methods are applicable not only to Markovian jumps, which is commonly assumed in the analysis of jump systems, but also to general stochastic jump linear systems. We also proved that mean square stability can be shown with regard to convergence of Wasserstein distance. These results address both transient and steady-state behavior of stochastic jump linear systems. The practical usefulness and efficiency of the proposed method are verified by examples.}
\bibliographystyle{plain}
\bibliography{Automatica}

\end{document}